\documentclass[sigconf,nonacm]{acmart}

\usepackage{multirow}%
\usepackage{amsmath,amsfonts}%
\usepackage{amsthm}%
\usepackage{mathrsfs}%
\usepackage[title]{appendix}%
\usepackage{xcolor}%
\usepackage{textcomp}%
\usepackage{manyfoot}%
\usepackage{booktabs}%
\usepackage{algorithm}%
\usepackage{algorithmicx}%
\usepackage{algpseudocode}%
\usepackage{subcaption}%
\usepackage{stmaryrd}%

\newtheorem{theorem}{Theorem}
\newtheorem{proposition}{Proposition}%
\newtheorem{lemma}{Lemma}%
\newtheorem{assumption}{Assumption}%

\title{A Geometric Decision Procedure for STL Feasibility and Repair}

\author{Avinash Malik}
\email{avinash.malik@auckland.ac.nz}
\affiliation{%
  \department{Department of Electrical, Computer, and Software Engineering}
  \institution{University of Auckland}
  \city{Auckland}
  \country{New Zealand}
}

\begin{document}

\begin{abstract}
  Feasibility of a Signal Temporal Logic (STL) specification asks
  whether, from a given initial state, an admissible control input can
  satisfy its continuous-time requirements under actuator limits.
  Establishing feasibility is essential for detecting unrealizable tasks
  and repairing specifications before execution. State-of-the-art
  Mixed-Integer Linear Programming (MILP) methods face growing model
  sizes as temporal horizons and formula complexity increase, and can
  accept control witnesses that violate continuous-time STL semantics
  between samples. We propose a geometric approach to STL feasibility
  and temporal repair that translates temporal requirements into spatial
  constraints evaluated at the initial state. Its compiled feasibility
  check is independent of horizon length for fixed compiled dimensions.
  We prove nonnegative continuous robustness under a joint controller
  assumption. On 6D UAV benchmarks, median geometric repair times are
  $0.265$--$1.510$ ms; we show up to $1.68\times10^4$ speedup over MILP
  feasibility checker. All geometric runs pass \textit{continuous} STL
  validation, whereas majority of completed MILP control witnesses
  violate continuous-time STL. 
\end{abstract}

\maketitle

\section{Introduction}
\label{sec1}

Signal Temporal Logic (STL)~\cite{maler2004monitoring} is a formal
language used to specify continuous real-time behavior, such as
reach-avoid requirements, deadline constraints, and spatial bounds over
continuous physical states. In formal methods and automated control,
\emph{STL control synthesis} is the problem of automatically
constructing a feedback control policy or state-input trajectory that
guarantees a physical system satisfies a given temporal logic
specification over time~\cite{raman2014synthesis, belta2019formal}. STL
synthesis has become a foundational building block in autonomous
systems, serving as the temporal logic layer for real-time Control
Barrier Function Quadratic Program (CBF-QP)
controllers~\cite{lindemann2018control}, Large Language Model (LLM) task
translation from natural language to formal
specifications~\cite{he2022deepstl, chen2023nl2tl}, autoregressive task
and motion planning~\cite{chen2023autotamp}, and data-driven
specification mining and inference~\cite{bartocci2022survey}.

A central challenge across these domains is handling
\emph{infeasibility}. In physical controller synthesis, infeasibility
occurs when a specification cannot be satisfied due to system dynamics,
actuator limits, or conflicting temporal
constraints~\cite{ghosh2016diagnosis, belta2019formal}. In modern
automated planning pipelines, it frequently arises from LLMs generating
logically incorrect or physically impossible temporal
deadlines~\cite{chen2023autotamp, chen2023nl2tl}. For both automated
verification tools and LLM-in-the-loop agents, simply returning a binary
failure signal is insufficient; the synthesis framework must provide
actionable diagnostic feedback to explain why a task failed and how to
correct it~\cite{kress2018synthesis, ghosh2016diagnosis}. In discrete
reactive synthesis, such feedback is typically delivered as
counterexample trajectories on symbolic
abstractions~\cite{kress2018synthesis}. In continuous STL synthesis, the
standard decision procedure encodes both the temporal logic formula and
the system dynamics into a Mixed-Integer Linear Program
(MILP)~\cite{raman2014synthesis, belta2019formal,ghosh2016diagnosis}.
When the specification is infeasible, existing approaches extract an
Irreducible Infeasible Set (IIS) from the optimization solver to isolate
the subset of conflicting constraints~\cite{ghosh2016diagnosis}.

However, MILP-based synthesis suffers from a fundamental structural
limitation. Standard encodings model time by discretizing the horizon
into explicit temporal steps, introducing binary decision variables for
every predicate at every time step, with nested temporal operators
compounding this variable count multiplicatively~\cite{kurtz2022mixed}.
Furthermore, sampled STL synthesis and
repair~\cite[Sec.~2.1]{ghosh2016diagnosis} can return a control witness
that satisfies every sampled constraint while violating the
specification between samples. Sampled satisfaction alone is therefore
an \textit{unsound} certificate of continuous-time STL satisfaction: the
returned trajectory may have negative robustness despite passing the
discrete check.

Our primary \textbf{contribution} is a geometric decision procedure for
STL synthesis feasibility that addresses scalability and continuous-time
feasibility certification. Rather than constructing a trajectory
optimization problem over discretized time, our procedure evaluates
synthesis feasibility via a local matrix-vector inequality check with no
temporal sampling loop. Certificate evaluation is horizon-independent;
experiments measure compilation and repair costs, including deeply
nested formulae.

\subsection{Motivating Example and Overview of the Proposed Decision
  Procedure}
\label{sec:motivating_example}

We use a robotic reach and dwell example for motivating the need for our
procedure.

\subsubsection{Physical Scenario and STL Specification}
\label{sec:phys-scen-stl}
Consider a robot with position $p(t)\in\mathbb R^2$, velocity
$v(t)\in\mathbb R^2$, and state $x(t)=[p(t)^T,v(t)^T]^T$. It starts at
$p(0)=[20.0,18.0]^T\text{ m}$ with $v(0)=[1.0,1.0]^T\text{ m/s}$ and obeys
independent double-integrator dynamics:
$ \dot p(t)=v(t),\dot v(t)=u(t), \|u(t)\|_\infty\le a_{\max}, \|v(t)\|_\infty\le
v_{\max}$, where $a_{\max}=1.0\text{ m/s}^2$ and
$v_{\max}=2.0\text{ m/s}$.

The task planning layer issues a multi-stage temporal requirement:
\emph{``Within $2$ seconds, reach a designated 2D goal region located at
  $p_1 \ge 25.0\text{ m}$ and $p_2 \ge 25.0\text{ m}$, and once inside,
  remain in that region continuously for at least $8$ seconds.''} Using
STL~\cite{maler2004monitoring}, this requirement is formally expressed
as the nested formula featuring a spatial conjunction:
\begin{equation}
  \psi = F_{[0, 2.0]} \, \left( G_{[0, 8.0]} \, (p_1 \ge 25.0 \wedge p_2 \ge 25.0) \right)
  \label{eq:motivating_spec}
\end{equation}
where $F_{[a,b]}$ and $G_{[a,b]}$ denote the continuous temporal
\emph{Eventually} (liveness) and \emph{Always} (safety) operators,
respectively.

With initial velocity $v_{\mathrm{init}}=1.0\text{ m/s}$ on each axis,
the greatest admissible displacement is obtained by accelerating to the
speed limit at $t_c=(v_{\max}-v_{\mathrm{init}})/a_{\max}=1.0\text{ s}$
and then cruising:
\begin{equation}
  D(T)=\begin{cases}
    v_{\mathrm{init}}T+\tfrac12 a_{\max}T^2, & 0\le T\le t_c,\\
    v_{\max}T-\dfrac{(v_{\max}-v_{\mathrm{init}})^2}{2a_{\max}}, & T\ge t_c,
  \end{cases}
  \label{eq:motivating_displacement}
\end{equation}
Thus $D(2)=3.5\text{ m}$, and the furthest reachable position at the
deadline is $[23.5,21.5]^T\text{ m}$. The predicate deficits are
$1.5\text{ m}$ and $3.5\text{ m}$, so the original specification is
infeasible. The second axis cannot reach its target before
$3.75\text{ s}$, giving a minimum physical deadline extension of
$1.75\text{ s}$.


\subsubsection{State-of-the-Art MILP Decision Procedure and Temporal Repair}
\label{sec:state-art-milp}

We apply the diagnosis and temporal repair procedure of Ghosh et
al.~\cite{ghosh2016diagnosis}, Section~5.2 and Equation~(16). Its
guarantees concern the discretized synthesis problem; sampled
verification alone does not guarantee continuous-time STL satisfaction
by the physical trajectory. Our Python reimplementation, with sampling
time $\Delta t=0.2\text{ s}$, produces:

\begin{verbatim}
Feasible:  False
Witness:   Maximum optimal slack = 3.50 m
Fix:       F_[3.8,10] G_[3.8,10] P
Time:      1.065109 seconds
Grid:      Verified
Continuous: Failed; robustness = -0.005 m
\end{verbatim}

Here $P\equiv(p_1\ge25.0\wedge p_2\ge25.0)$, and \texttt{Feasible} refers to the
original formula. Time covers the complete diagnosis-and-repair
pipeline.

\paragraph{Dwell and Continuous Satisfaction.}
The repair requires $P$ on $[t+3.8,t+10]$ for some $t\in[3.8,10]$. It
shortens the required dwell from $8\text{ s}$ to $6.2\text{ s}$ and
delays its start relative to the outer witness. Moreover, sampled
validation accepts a controller that violates the repaired formula
continuously. For example, at $t=13.8\text{ s}$ the returned trajectory
has $p_2\approx25\text{ m}$, $v_2\approx-0.1\text{ m/s}$, and
$u_2=1\text{ m/s}^2$ over the next $0.2\text{ s}$. Its segment is
$p_2(13.8+\tau)\approx25-0.1\tau+\tfrac12\tau^2$, for
$0\le\tau\le0.2$, so both endpoint samples are approximately
$25\text{ m}$, but $p_2(13.9)\approx24.995\text{ m}$. An audit
of all continuous outer witness times, including off-grid choices, gives
complete-formula robustness $-0.005\text{ m}$. Hence the returned
controller does \textbf{not} satisfy the repaired STL formula under
continuous-time semantics, despite passing the sampled check.

\subsubsection{Overview of the Proposed Geometric Decision Procedure}
\label{sec:overv-prop-geom}

Our proposed technique (Algorithm~\ref{alg:high_level_decision})
compiles the specification into continuous geometric fields and
evaluates a local matrix-vector inclusion check at the initial state,
without constructing a time grid. Compilation follows the formula
syntax; evaluation of the compiled check does not introduce variables
for individual temporal samples. A violated certificate gives a spatial
deficit that is mapped into an analytical temporal candidate.
Algorithm~\ref{alg:high_level_decision} invokes
Algorithm~\ref{alg:temporal_repair} when this certificate fails. The
first candidate is checked immediately; iteration is the fallback.


\begin{algorithm}[htbp]
  \caption{High-Level Geometric Decision Procedure}
  \label{alg:high_level_decision}
  \footnotesize
  \begin{algorithmic}[1]
    \Require Initial state $x(0)$, specification $\psi$, dynamics, permitted edits and repair budget
    \State \textbf{Geometric Compilation:} Pre-process $\psi$ and compile its complete geometric certificate at $x(0)$.
    \State \textbf{Feasibility Evaluation:} Perform a geometric inclusion check to verify if $x(0)$ is contained within the effective spatial boundaries.
    \If{the complete certificate succeeds}
      \State \Return \texttt{feasibility certified}
    \Else
      \State \textbf{Witness Generation:} Evaluate the maximum positive residual of the compiled requirements.
      \State \textbf{Temporal Repair:} Call Algorithm~\ref{alg:temporal_repair} with the permitted edits and repair budget.
      \State \Return \texttt{original not certified}, Spatial Gap, repair result
    \EndIf
  \end{algorithmic}
\end{algorithm}

\begin{algorithm}[htbp]
  \caption{Analytical-First Iterative Temporal Repair}
  \label{alg:temporal_repair}
  \footnotesize
  \begin{algorithmic}[1]
    \Require Failed specification $\psi$, initial state, permitted interval bounds, attempt budget $K$
    \State $\psi_{\mathrm{cur}}\gets\psi$
    \For{$k=1,\ldots,K$}
      \State Evaluate the complete deficit and branch priorities.\label{alg:repair_evaluate}
      \State Select the largest-weight eligible $F$ branch; permit $G$ only after the maximum permitted $F$-only certificate test fails.\label{alg:repair_select}
      \If{no permitted occurrence can be updated}
        \State \textbf{break}
      \EndIf
      \State Invert the maximum deficit for a candidate; clip the $F$ extension or $G$ contraction to edit bounds.\label{alg:repair_candidate}
      \State Recompile the complete candidate specification $\psi_{\mathrm{new}}$.\label{alg:repair_compile}
      \If{its complete certificate succeeds}
        \State \Return \texttt{repair certified}, $\psi_{\mathrm{new}}$\label{alg:repair_accept}
      \EndIf
      \State Retain monotone effective updates in $\psi_{\mathrm{cur}}$; skip ineffective occurrences.\label{alg:repair_retain}
    \EndFor
    \State \Return \texttt{repair not found}\label{alg:repair_failure}
  \end{algorithmic}
\end{algorithm}

The weights and formula tree locate the repair; the deficit sizes it.
Only upper endpoints change. Safety contraction is permitted after
dwell-preserving certificate search fails within the allowed bounds.

\paragraph{Application of Algorithm~\ref{alg:high_level_decision} to the
  Motivating Example}
The four operations in the algorithm specialize to this task as follows.
\begin{enumerate}
\item \textbf{Step 1 (Geometric Compilation):} Translate the arrival and
  dwell requirements into one conservative spatial certificate using the
  finite-time convergence law developed in
  Section~\ref{sec:bhat-bernstein}. The certificate includes both the
  arrival allowance and the safety buffer for the eight-second dwell.

\item \textbf{Step 2 (Feasibility Evaluation):}
  Evaluating the complete certificate at $x(0)=[20,18,1,1]^T$ gives
  $-11.425188\text{ m}$, so the initial state is not certified.

\item \textbf{Step 3 (Witness Generation):} Report the positive
  certificate deficit, $11.425188\text{ m}$. The geometric priorities
  identify the second position requirement as causing the deficit. This
  deficit includes the conservative dwell buffer and differs from the
  physical arrival shortfall. 

\item \textbf{Step 4 (Temporal Repair):}
  Algorithm~\ref{alg:temporal_repair} selects the enclosing $F$ and maps
  the deficit through the convergence law
  (Section~\ref{sec:bhat-bernstein}) to propose a $14.623436\text{ s}$
  extension. Only the arrival deadline changes, giving
  $\psi'_{\mathrm{geom}}=F_{[0,16.623436]}G_{[0,8]}P$ at the displayed
  precision. Recompilation gives a positive complete certificate of
  $1.035294\text{ m}$, so the first candidate succeeds
  (Algorithm~\ref{alg:temporal_repair}). An independent audit verifies
  that the STL formula is satisfied continuously, unlike the returned
  MILP witness.
\end{enumerate}

The algorithm returns the failed original feasibility check, spatial
deficit, and sufficient deadline extension:

\begin{verbatim}
Feasible:  False
Witness:   Certificate deficit = 11.425188 m
Fix:       Increase F deadline by +14.623436 s
Time:      0.000227960 seconds
Continuous: Verified; dwell = 8.00 seconds
\end{verbatim}

The extension is conservative rather than minimum. The MILP pipeline
takes about $4.67\times10^3$ times longer to compute the temporal fix
compared to our procedure. The example illustrates dwell preservation,
continuous satisfaction, and computation time, alongside the
conservatism of the geometric repair.

After the preliminaries,
Sections~\ref{sec:pre-processing},~\ref{sec:compilation}
and~\ref{sec:polyhedral-bridge} develop geometric compilation and
evaluation. Section~\ref{sec:theor-guar} establishes soundness,
incompleteness, and complexity; Section~\ref{sec:experiments} evaluates
these results.

\section{Preliminaries}
\label{sec:preliminaries}

In this section we give the preliminaries needed to read the rest of the
paper.

\subsection{System Model and STL Syntax}
\label{sec:stl-syntax-semantics}

Consider a continuous-time linear dynamical system governed by:
\begin{equation*}
    \dot{x}(\tau) = \mathcal{F} x(\tau) + \mathcal{G} u(\tau), \quad x(\tau) \in \mathbb{R}^d, \; u(\tau) \in \mathcal{U} \subset \mathbb{R}^m
\end{equation*}
where $x(\tau)$ denotes the physical state trajectory at time instant
$\tau \ge 0$, $u(\tau)$ is the bounded control input, and
$\mathcal{F}, \mathcal{G}$ are constant system matrices. We evaluate Signal Temporal Logic
(STL) specifications built over affine state
predicates~\cite{maler2004monitoring, donze2010robust}.

\paragraph{Syntax.} 
The syntax of the supported input STL grammar is defined inductively as:
\begin{equation*}
  \begin{aligned}
    \psi &::= \top \mid \bot \mid \mu(x) \ge 0 \mid \neg\psi \mid \psi_1\wedge\psi_2 \mid \psi_1\vee\psi_2 \mid G_{[a,b]}\psi \mid F_{[a,b]}\psi.
  \end{aligned}
\end{equation*}
where $\top$ and $\bot$ represent logical truth and falsehood,
$[a,b]$ is a temporal interval with
bounds $0 \le a \le b < \infty$, and $\mu: \mathbb{R}^d \to \mathbb{R}$ is an affine spatial predicate
defining an atomic constraint: $ \mu(x) \triangleq g^T x - h \ge 0$, with normal
vector $g \in \mathbb{R}^d$ and scalar threshold $h \in \mathbb{R}$. The temporal operators
$G_{[a,b]}$ and $F_{[a,b]}$ denote the \emph{Always} (globally) and
\emph{Eventually} (future) operators, respectively.

\paragraph{Robust Semantics}
\label{sec:robust-semantics}
Rather than evaluating formulas to boolean truth values, we adopt the
quantitative (robust) semantics of STL \cite{donze2010robust}. The
robustness degree function $\rho(\psi, x, \tau) \in \mathbb{R}\cup\{-\infty,+\infty\}$ quantifies how strongly a
trajectory $x$ satisfies or violates a specification $\psi$ at time
$\tau \ge 0$. Positive robustness guarantees Boolean satisfaction,
and negative robustness guarantees violation; zero robustness is a
boundary case. Our feasibility certificates guarantee $\rho\ge0$.
In particular, Boolean negation of $\mu(x)\ge0$ requires $\mu(x)<0$,
whereas its nonnegative robustness test includes $\mu(x)=0$.
The robustness function is defined recursively as:
{\small
  \begin{gather*}
    \begin{alignedat}{2}
      \rho(\top, x, \tau) &\triangleq +\infty,
      &\qquad \rho(\mu(x) \ge 0, x, \tau) &\triangleq \mu(x(\tau)) \\[-2pt]
      \rho(\bot, x, \tau) &\triangleq -\infty,
      &\qquad \rho(\neg \psi, x, \tau) &\triangleq -\rho(\psi, x, \tau)
    \end{alignedat} \\[-2pt]
    \begin{aligned}
      \rho(\psi_1 \wedge \psi_2, x, \tau) &\triangleq \min \Big( \rho(\psi_1, x, \tau), \, \rho(\psi_2, x, \tau) \Big) \\[-2pt]
      \rho(\psi_1 \vee \psi_2, x, \tau) &\triangleq \max \Big( \rho(\psi_1, x, \tau), \, \rho(\psi_2, x, \tau) \Big) \\[-2pt]
      \rho(G_{[a,b]} \psi, x, \tau) &\triangleq \min_{\tau' \in [\tau+a, \tau+b]} \rho(\psi, x, \tau') \\[-2pt]
      \rho(F_{[a,b]} \psi, x, \tau) &\triangleq \max_{\tau' \in [\tau+a, \tau+b]} \rho(\psi, x, \tau')
    \end{aligned}
  \end{gather*}
}
When checking the overall synthesis feasibility of a mission beginning
at the initial state $x(0)$, we evaluate robustness at $\tau = 0$, denoted
as $\rho(\psi, x, 0)$.

\subsection{Bhat--Bernstein Dynamics and Level-Set Inversion}
\label{sec:bhat-bernstein}

To construct exact, continuous spatial representations for temporal
logic operators, we must bridge the gap between continuous system
dynamics and strict temporal deadlines. We leverage finite-time
convergence dynamics governed by the fractional power law introduced by
Bhat and Bernstein~\cite{bhat2000finite}.

We formulate a generalized scalar propagation rule to model the
evolution of our continuous spatial predicates. Recalling the
continuously differentiable scalar predicate
$\mu: \mathbb{R}^d \to \mathbb{R}$ from Section~\ref{sec:stl-syntax-semantics}, we bound the
\textit{worst-case} evolution of $\mu(x(t))$ along a closed-loop physical
state trajectory using the scalar fractional differential equation:
\begin{equation}
  \dot{\mu}(t) = -c \cdot \operatorname{sgn}(\mu(t)) |\mu(t)|^{\beta}, \quad c > 0, \; \beta \in (0, 1),
  \label{eq:bhat_differential}
\end{equation}
where $c > 0$ is the positive coefficient of the convergence law, and
$\beta \in (0, 1)$ is the fractional power exponent enforcing the
non-Lipschitz behavior at the boundary $\mu = 0$ required for finite-time
reachability.

Integrating Eq.~\eqref{eq:bhat_differential} directly yields the exact
closed-form settling time $\tau_{s}(\mu_0)$ required for a trajectory
starting at an initial spatial distance $\mu(0) = \mu_0 > 0$ to reach the
target boundary $\mu(\tau) = 0$:
\begin{equation}
  \tau_{s}(\mu_0) = \int_{0}^{\mu_0} \frac{d s}{c \, s^{\beta}} = \frac{\mu_0^{1-\beta}}{c(1-\beta)}
  \label{eq:settling_time_integral}
\end{equation}
We invert this scalar evolution to express a temporal requirement as
an initial level-set threshold.

\begin{lemma}[Level-Set Inversion of Fractional Dynamics]
\label{lem:level_set_inversion}
Let $c>0$, $\beta\in(0,1)$, and $\mu_{\text{target}}\ge0$.
For the scalar dynamics in Equation~\eqref{eq:bhat_differential},
the unique initial level $\mu_0\ge\mu_{\text{target}}$ whose first
reaching time of $\mu_{\text{target}}$ is a prescribed $T\ge0$ is
\begin{equation}
  \mu_0(T, \mu_{\text{target}}) = \left( \mu_{\text{target}}^{1-\beta} + c(1-\beta) T \right)^{\frac{1}{1-\beta}}
  \label{eq:spatial_inversion}
\end{equation}
For $\mu_{\text{target}}=0$, this is the inverse of the settling-time
expression in Equation~\eqref{eq:settling_time_integral}.
\end{lemma}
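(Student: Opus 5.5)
I will integrate the scalar dynamics in Equation~\eqref{eq:bhat_differential} explicitly on the region $\mu>0$. The key observation is that the map $\mu\mapsto\mu^{1-\beta}$ linearizes the flow. Since $\mu_{\text{target}}\ge0$ and we seek $\mu_0\ge\mu_{\text{target}}$, I will first dispose of the degenerate case $T=0$. There the first reaching time is zero exactly when $\mu_0=\mu_{\text{target}}$, which matches the formula $(\mu_{\text{target}}^{1-\beta})^{1/(1-\beta)}=\mu_{\text{target}}$.

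For $T>0$ the solution must start strictly above the target, hence at $\mu_0>0$. While $\mu(t)>0$, the equation reads $\dot\mu=-c\mu^\beta$, and the right-hand side is locally Lipschitz on $(0,\infty)$, so solutions are unique there. Setting $w(t)=\mu(t)^{1-\beta}$ gives
\[
\dot w=(1-\beta)\mu^{-\beta}\dot\mu=-c(1-\beta).
\]
It follows that $w(t)=\mu_0^{1-\beta}-c(1-\beta)t$, and hence
\[
\mu(t)=\bigl(\mu_0^{1-\beta}-c(1-\beta)t\bigr)^{1/(1-\beta)}
\]
as long as this quantity stays positive. This solution is strictly decreasing, so the first time it hits any level $\mu_{\text{target}}\in[0,\mu_0)$ is the unique $t$ with $w(t)=\mu_{\text{target}}^{1-\beta}$, namely
\[
t^\ast(\mu_0)=\frac{\mu_0^{1-\beta}-\mu_{\text{target}}^{1-\beta}}{c(1-\beta)}.
\]
For $\mu_{\text{target}}=0$ the curve reaches $0$ continuously at $t^\ast$, and $\mu\equiv0$ is an equilibrium. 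First reaching time is all that matters, so non-uniqueness of solutions at $\mu=0$ beyond that instant is irrelevant.

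Next I will show that $t^\ast$ is a strictly increasing continuous bijection from $[\mu_{\text{target}},\infty)$ onto $[0,\infty)$. This holds because $\mu_0\mapsto\mu_0^{1-\beta}$ is strictly increasing and unbounded for $1-\beta>0$. Solving $t^\ast(\mu_0)=T$ then gives the unique value $\mu_0^{1-\beta}=\mu_{\text{target}}^{1-\beta}+c(1-\beta)T$. Since the right side is nonnegative, taking the $1/(1-\beta)$ power is legitimate, and this is Equation~\eqref{eq:spatial_inversion}. Setting $\mu_{\text{target}}=0$ reduces $t^\ast$ to $\mu_0^{1-\beta}/(c(1-\beta))$, which is Equation~\eqref{eq:settling_time_integral}, so the formula inverts $\tau_s$.

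The only delicate step is justifying uniqueness and the first-hitting characterization despite the non-Lipschitz behaviour at $\mu=0$. I will handle it by working only on the interval $[0,t^\ast)$, where $\mu>0$ and standard Picard--Lindel\"of uniqueness applies, and then extending to $t^\ast$ by continuity.
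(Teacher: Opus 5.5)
Your proof is correct and follows the paper's route: explicit integration of $\dot\mu=-c\mu^\beta$ on the positive branch (your substitution $w=\mu^{1-\beta}$ is the paper's separation of variables), with uniqueness from the strict monotonicity of the reaching time in $\mu_0$ and the $T=0$ case handled separately. Your Picard--Lindel\"of justification on $(0,\infty)$ adds rigor the paper omits. Your caveat about non-uniqueness at $\mu=0$ is unnecessary but harmless: the right-hand side is nonincreasing in $\mu$, so forward solutions are unique even there.
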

\begin{proof}
  Until the target is reached, the nonnegative branch satisfies
  $\dot\mu=-c\mu^\beta$. Separating variables gives the first reaching time
  $T=\int_{\mu_{\text{target}}}^{\mu_0}\frac{ds}{c\,s^\beta}
  =\frac{\mu_0^{1-\beta}-\mu_{\text{target}}^{1-\beta}} {c(1-\beta)}$. Rearranging
  yields Equation~\eqref{eq:spatial_inversion}; uniqueness follows
  because the reaching time is strictly increasing in $\mu_0$. The
  integral is finite at a zero target since $\beta<1$, so the same
  calculation includes the settling-time case. For $T=0$, the initial
  level equals the target.
\end{proof}

\section{Pre-processing}
\label{sec:pre-processing}

Before geometric compilation, we transform the input specification into
negation normal form (NNF), where negations occur only on atomic
predicates. Double negation is eliminated, De Morgan's laws exchange
conjunction and disjunction, and temporal duality gives
$\neg F_{[a,b]}\phi\equiv G_{[a,b]}\neg\phi$ and
$\neg G_{[a,b]}\phi\equiv F_{[a,b]}\neg\phi$, retaining the same
intervals~\cite{maler2004monitoring,donze2010robust}.
These rewrites preserve standard quantitative robustness:
$\rho(\operatorname{NNF}(\psi),x,\tau)=\rho(\psi,x,\tau)$.
We also simplify truth and falsehood constants, including
$\neg\top=\bot$, $\neg\bot=\top$, the Boolean constant identities,
and $O_{[a,b]}\top=\top$, $O_{[a,b]}\bot=\bot$ for $O\in\{F,G\}$.
A specification reduced to $\top$ is accepted directly; one reduced
to $\bot$ is rejected directly. 

A negated atomic literal retains its syntax and is assigned the exact
robustness function $-\mu$. For $\mu(x)=g^Tx-h$, this function is
$(-g)^Tx-(-h)$ and remains affine. We never negate a compiled composite
surrogate. In the following sections, specifications and their
subformulas refer to the pre-processed NNF, and
$\llbracket\psi\rrbracket$ denotes compilation after this pre-processing.

\section{Geometric Semantics}
\label{sec:compilation}

Geometric semantics recursively map a preprocessed STL formula $\psi$
to a spatial field
$\llbracket\psi\rrbracket:\mathbb R^d\to\mathbb R\cup\{\pm\infty\}$.
The certified initial-state set is
$\mathcal X_{\mathrm{cert}}\triangleq
\{x_0\in\mathbb R^d\mid\llbracket\psi\rrbracket(x_0)\ge0\}$;
soundness under the controller assumption is established below.
Local compilation requires a finite value and an existing finite
gradient at the query, rather than global differentiability.

\subsection{Geometric Denotational Semantics}
\label{sec:denotational-semantics}

The valuation map operates inductively on the preprocessed syntax tree.
Equation~\eqref{eq:denotational_semantics} defines the compositional
translation rules in point-free functional form:
{\small
  \begin{gather}
    \begin{alignedat}{2}
      \llbracket \top \rrbracket &\triangleq +\infty,
      &\qquad \llbracket \mu(x) \ge 0 \rrbracket &\triangleq \mu \\[-2pt]
      \llbracket \bot \rrbracket &\triangleq -\infty,
      &\qquad \llbracket \neg(\mu(x) \ge 0) \rrbracket &\triangleq -\mu
    \end{alignedat} \nonumber \\[-2pt]
    \begin{aligned}
      \llbracket \psi_1 \wedge \psi_2 \rrbracket
        &\triangleq -\frac{1}{\eta}\ln\!\left(e^{-\eta \llbracket \psi_1 \rrbracket}+e^{-\eta \llbracket \psi_2 \rrbracket}\right) \\[-2pt]
      \llbracket \psi_1 \vee \psi_2 \rrbracket
        &\triangleq \frac{1}{\eta}\ln\!\left(e^{\eta \llbracket \psi_1 \rrbracket}+e^{\eta \llbracket \psi_2 \rrbracket}\right)-\frac{\ln2}{\eta} \\[-2pt]
      \llbracket F_{[a,b]} \psi \rrbracket
        &\triangleq \operatorname{InvertLiveness}(\llbracket \psi \rrbracket,a,b) \\[-2pt]
      \llbracket G_{[a,b]} \psi \rrbracket
        &\triangleq \operatorname{InvertSafety}(\llbracket \psi \rrbracket,a,b)
    \end{aligned}
    \label{eq:denotational_semantics}
  \end{gather}
}
\subsection{Operator Mappings and Level-Set Inversion}
\label{sec:oper-mapp-level}

The core of the geometric semantics relies on mapping temporal operators
$F_{[a,b]}$ and $G_{[a,b]}$ into higher-order functions (functionals)
that take an un-evaluated spatial function
$f: \mathbb{R}^d \to \mathbb{R}$ and return a transformed spatial function over the initial
state $x(0)$. To do this, we leverage the dynamics established in
Section~\ref{sec:bhat-bernstein}. We first define the necessary
geometric set at the start of the active interval (time $a$) to satisfy
the property for the duration $b-a$, utilizing the closed-form spatial
inversion from Equation~\eqref{eq:spatial_inversion}. We then apply a
backward pre-image operator to map this spatial requirement back to time
zero.

\subsubsection{Liveness via Level-Set Inversion (InvertLiveness)}
We define $\text{InvertLiveness}$ as a higher-order operator that
transforms an inner spatial level-set function
$f \triangleq \llbracket \psi \rrbracket$ into a time-zero liveness spatial field.
The target set $\{x \in \mathbb{R}^d \mid f(x) \ge 0\}$ effectively acts as a capture
basin guiding the system.

To guarantee the state reaches the target within the active temporal
window of duration $b-a$, we apply the Bhat--Bernstein drift bound. By
evaluating Equation~\eqref{eq:spatial_inversion} with $T = b - a$ and a
zero-boundary target, the state at the start of the interval, $x(a)$,
must reside within a geometric capture basin parameterised by $f$:
$ \mathcal{S}_{b-a}(f) \triangleq \left\{ x \in \mathbb{R}^d \;\middle\vert\; -f(x) \le
  \big(c(1-\beta)(b-a)\big)^{\frac{1}{1-\beta}} \right\} $. To enforce this
requirement from the initial state $x(0)$, the system must be driven
such that it enters $\mathcal{S}_{b-a}(f)$ at exactly local time
$\tau=a$. We define the $a$-step controlled backward pre-image operator
$Pre_a$, which yields the exact set of initial states at $\tau=0$ from
which there exists an admissible control to reach the capture basin at
time $a$:
\begin{equation*}
    Pre_a(\mathcal{S}_{b-a}(f)) \triangleq \{ x \in \mathbb{R}^d \mid \exists u(\cdot) \in \mathcal{U}, \text{ s.t. } \Phi_a(x, u(\cdot)) \in \mathcal{S}_{b-a}(f) \},
\end{equation*}
where $\Phi_a$ represents the continuous state transition flow. The functional $\text{InvertLiveness}(f, a, b): \mathbb{R}^d \to \mathbb{R}$ is formally defined as the signed distance function $l$ to this pre-image boundary:
\begin{equation}
  \text{InvertLiveness}(f, a, b) \triangleq l_{Pre_a(\mathcal{S}_{b-a}(f))}
  \label{eq:2}
\end{equation}

\subsubsection{Safety via Level-Set Inversion (InvertSafety)}
Similarly, we define $\text{InvertSafety}$ as a higher-order operator
mapping an inner spatial level-set function
$f \triangleq \llbracket \psi \rrbracket$ into a time-zero safety spatial field,
acting as a guard set bounding the system.

To endure the worst-case adversarial drift and guarantee safety
throughout the entire active duration $b-a$, we apply the spatial
inversion from Equation~\eqref{eq:spatial_inversion}. The state at the
beginning of the interval, $x(a)$, must be strictly bounded from below
by a contracted geometric safety buffer:
$ \mathcal{B}_{b-a}(f) \triangleq \left\{ x \in \mathbb{R}^d \;\middle\vert\; f(x) \ge
  \big(c(1-\beta)(b-a)\big)^{\frac{1}{1-\beta}} \right\} $. To satisfy this
delayed constraint from initialization, the trajectory is unconstrained
during $[0, a)$, but it must enter $\mathcal{B}_{b-a}(f)$ at exactly time
$\tau=a$. We apply the backward pre-image operator $Pre_a$ to propagate
this requirement back to time $0$:
\begin{equation*}
  Pre_a(\mathcal{B}_{b-a}(f)) \triangleq \{ x \in \mathbb{R}^d \mid \exists u(\cdot) \in \mathcal{U}, \text{ s.t. } \Phi_a(x, u(\cdot)) \in \mathcal{B}_{b-a}(f) \}.
\end{equation*}
The functional
$\text{InvertSafety}(f, a, b): \mathbb{R}^d \to \mathbb{R}$ is then defined as the signed
distance function to this time-zero pre-image boundary:
\begin{equation}
  \text{InvertSafety}(f, a, b) \triangleq l_{Pre_a(\mathcal{B}_{b-a}(f))}
  \label{eq:1}
\end{equation}

\section{Polyhedral Compilation}
\label{sec:polyhedral-bridge}

We now formalize the mapping from the recursive map evaluated at the
initial state, $\llbracket \psi \rrbracket(x(0)) \ge 0$, to the polyhedral
system $A x(0) \le b_{\text{eff}}$.

\begin{theorem}[Polyhedral Compilation and Local Evaluation]
\label{thm:polyhedral_compilation}
Consider a nonconstant pre-processed STL specification $\psi$ constructed
over affine atomic spatial
predicates, 
under the continuous linear dynamics
defined in Section~\ref{sec:stl-syntax-semantics}. Evaluating the
spatial gradients of the continuous geometric semantics
$\llbracket \psi \rrbracket$ at the nominal operating point $x^* = x(0)$
compiles the formula into a local polyhedral system
$A x(0) \le b_{\text{eff}}$, where $A \in \mathbb{R}^{1 \times d}$ and
$b_{\text{eff}} \in \mathbb{R}^1$. The single row certifies the complete
composed field, whose value and spatial gradient are assumed finite at
the query point.

At the query point $x^* = x(0)$, evaluating this compiled polyhedral
check is algebraically equivalent to evaluating the continuous geometric
semantics:
$ A x(0) \le b_{\text{eff}} \iff \llbracket \psi \rrbracket(x(0)) \ge 0 $, where
$A$ and $b_{\text{eff}}$ are constructed inductively from the system
flow matrices, Bhat--Bernstein level-set inversions, and local gradient
projections of the Log-Sum-Exp compositional operators in
Equation~\eqref{eq:denotational_semantics}.
\end{theorem}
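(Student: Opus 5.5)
The approach is to linearize the composed field $\llbracket\psi\rrbracket$ at the query point and then observe that the tangent affine function agrees with the field there. Let $f=\llbracket\psi\rrbracket$ and $x^*=x(0)$. By hypothesis, $f(x^*)$ and $\nabla f(x^*)$ exist and are finite. We set
\[
A \triangleq -\nabla f(x^*)^T \in\mathbb{R}^{1\times d}, \qquad b_{\text{eff}} \triangleq f(x^*) - \nabla f(x^*)^T x^* .
\]
The first-order model $\hat f(x)=\nabla f(x^*)^T(x-x^*)+f(x^*)$ then satisfies $\hat f(x)\ge 0 \iff Ax\le b_{\text{eff}}$. Evaluating at $x=x^*$ gives $Ax^*-b_{\text{eff}}=-f(x^*)$. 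The equivalence $Ax(0)\le b_{\text{eff}}\iff\llbracket\psi\rrbracket(x(0))\ge0$ is therefore an algebraic identity at the query point. This settles the equivalence claim; what remains is to show that $A$ and $b_{\text{eff}}$ are built inductively from the stated ingredients.

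For the inductive construction we proceed by structural induction on the NNF syntax tree. The invariant is a pair (value, gradient) at $x^*$, which amounts to a local affine row.

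\textbf{Atoms.} For $\mu=g^Tx-h$ the pair is $(g^Tx^*-h,\;g)$, and a negated literal gives $(-\mu(x^*),\;-g)$. Both are exact, since the predicates are affine.

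\textbf{Conjunction and disjunction.} The Log-Sum-Exp rules in Equation~\eqref{eq:denotational_semantics} are smooth wherever the children are finite. The chain rule gives the gradient as a softmin/softmax convex combination of the children's gradients, with weights $e^{\mp\eta f_i(x^*)}/\sum_j e^{\mp\eta f_j(x^*)}$. These are exactly the ``local gradient projections''.

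\textbf{Constants.} Because $\psi$ is nonconstant after the simplifications of Section~\ref{sec:pre-processing}, no subtree evaluates to $\pm\infty$ in a way that propagates to the root. An infinite child of $\wedge$ or $\vee$ either contributes weight $0$ or has already been simplified away.

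\textbf{Temporal operators.} For $F_{[a,b]}$ and $G_{[a,b]}$, the inner threshold is shifted by the Lemma~\ref{lem:level_set_inversion} constant $\big(c(1-\beta)(b-a)\big)^{1/(1-\beta)}$: it is added to $f$ for $\mathcal S_{b-a}$ and subtracted for $\mathcal B_{b-a}$. The pre-image $Pre_a$ under $\dot x=\mathcal F x+\mathcal G u$ is then expressed through the flow $\Phi_a(x,u)=e^{\mathcal F a}x+\int_0^a e^{\mathcal F(a-s)}\mathcal G u(s)\,ds$. For a half-space $\{\,y : n^Ty\ge r\,\}$, the pre-image is $\{\,x : n^Te^{\mathcal F a}x\ge r-\sigma_a(n)\,\}$. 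Here $\sigma_a(n)=\int_0^a\max_{u\in\mathcal U} n^Te^{\mathcal F(a-s)}\mathcal G u\,ds$ is the support of the reachable input set, which is finite because $\mathcal U$ is bounded. The compiled row therefore transforms as $n\mapsto e^{\mathcal F^Ta}n$, with the offset absorbing the Bhat--Bernstein shift and $\sigma_a$. The signed distance $l$ to this pre-image is the normalized affine function, and its value and gradient at $x^*$ are finite. Normalization only rescales by a positive constant, so it does not change the sign.

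The main obstacle is the temporal case. When the inner field is not affine, $Pre_a$ of its superlevel set is not a half-space, so ``exact'' signed distance and ``linearization'' conflict. I would resolve this by applying the pre-image to the inner local affine row at the propagated point, and by defining the compiled field at that node as the resulting affine function. The statement only asserts pointwise equivalence with $\llbracket\psi\rrbracket$ at $x^*$, given that the value and gradient there are finite. Consistency then reduces to checking that the value and the gradient of this node's field at $x^*$ are the ones the recursion produces. Once that invariant holds at every node, the root row $(A,b_{\text{eff}})$ is constructed from flow matrices, level-set inversions and LSE projections, and the first paragraph closes the proof.
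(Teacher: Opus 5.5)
Your first paragraph is the paper's proof. The paper sets $A=-\nabla\llbracket\psi\rrbracket(x^*)^T$ and $b_{\text{eff}}=\llbracket\psi\rrbracket(x^*)-\nabla\llbracket\psi\rrbracket(x^*)^Tx^*$, and notes that the Taylor displacement vanishes at $x^*=x(0)$; that alone settles the equivalence. For the ``constructed inductively'' clause, the paper only computes the complete composed field recursively. Boolean and temporal composition are evaluated \emph{before} the single linear row is formed at the root.

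The genuine problem is your resolution of the temporal case, which does the opposite. Defining the node field as the signed distance to $Pre_a$ of a linearized inner row does not compute $l_{Pre_a(\mathcal B_{b-a}(f))}$ or $l_{Pre_a(\mathcal S_{b-a}(f))}$; it replaces them.

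\begin{itemize}
\item The row your recursion outputs is then not the Taylor row of $\llbracket\psi\rrbracket$ at $x^*$, so the identity from your first paragraph does not transfer to it.
\item ``Consistency'' cannot be obtained by definition. You would have to show that the exact signed distance to the pre-image of a nonlinear superlevel set has the same value and gradient at $x^*$ as your affine surrogate, and that is false in general.
\item A concrete case: an LSE conjunction $f$ of affine literals is concave, so every tangent half-space $\{\hat f\ge\mu_0\}$ contains $\{f\ge\mu_0\}$. Since $Pre_a$ and $l$ are inclusion-monotone (as in the proof of Lemma~\ref{lem:repair_monotonicity}), your node value is at least the true one and generically strictly larger. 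The compiled row can therefore accept a state with $\llbracket\psi\rrbracket(x_0)<0$, which is exactly what Theorem~\ref{thm:geometric_soundness} cannot tolerate.
\item ``The propagated point'' is also not well defined, because $\Phi_a(x^*,u)$ depends on $u$.
\end{itemize}

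Drop this step: keep the exact field through every temporal node and linearize only at the root, as the paper does. Your atom and LSE chain-rule bookkeeping then correctly describes how $\nabla\llbracket\psi\rrbracket(x^*)$ is assembled. It does need the inner fields to be differentiable wherever you apply the chain rule, and the theorem does not assume this; it only posits a finite value and gradient for the complete field.
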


\begin{proof}
First compute the complete geometric field recursively using the affine
literals, temporal inversions, and LSE Boolean operators in
Equation~\eqref{eq:denotational_semantics}. Boolean and temporal
composition are therefore evaluated before forming the linear row.
The first-order Taylor expansion about $x^*$ has affine part
$\llbracket\psi\rrbracket(x^*)+
\nabla\llbracket\psi\rrbracket(x^*)^T(z-x^*)$ for a nearby initial
state $z$. This expression retains the field's value at $x^*$ and uses
its gradient to describe how the certificate changes locally.

Requiring this affine expression to be nonnegative gives a linear
inequality. Where the value and gradient are finite, define
$A=-\nabla\llbracket\psi\rrbracket(x^*)^T$ and
$b_{\text{eff}}=\llbracket\psi\rrbracket(x^*)
-\nabla\llbracket\psi\rrbracket(x^*)^Tx^*$. With these coefficients,
$A x(0)\le b_{\text{eff}}$ is equivalent to
$\llbracket\psi\rrbracket(x^*)+ \nabla\llbracket\psi\rrbracket(x^*)^T(x(0)-x^*)\ge0$.
At the compilation query $x^*=x(0)$, the displacement vanishes, so the
Taylor model reproduces the actual field value exactly. Hence the row
inequality is precisely the nonnegativity check for the complete
composed field at that point.
\end{proof}

With explicit temporal dependencies translated into time-zero spatial
bounds $b_{\text{eff}}$, the control synthesis feasibility check is
decoupled from discrete clock variables.

\subsection{Application to the Motivating Example}
\label{sec:compilation_motivating_example}

We apply the local compilation to the double-integrator task in
Equation~\eqref{eq:motivating_spec}, using the full state
$x=[p_1,p_2,v_1,v_2]^T$ and query point
$x^*=x(0)=[20.0,18.0,1.0,1.0]^T$, with convergence parameters
$c=0.6$, $\beta=0.2$, and smoothing parameter $\eta=10$.

\paragraph{Steps 1 \& 2: LSE Formulation and Temporal Inversion.}
First compile the target conjunction as
$f(p)=-\eta^{-1}\ln\bigl(e^{-\eta(p_1-25)}+e^{-\eta(p_2-25)}\bigr)$. 
Both temporal lower endpoints are zero, so $Pre_0$ is the identity.
Applying the safety contraction and then the liveness expansion gives
$ \llbracket\psi\rrbracket(p,v) =f(p)-\mu_0(8,0)+\mu_0(2,0)$.
Equation~\eqref{eq:spatial_inversion} gives
$\mu_0(8,0)=5.375440\text{ m}$ and $\mu_0(2,0)=0.950253\text{ m}$. These
offsets include the conservative dwell buffer and arrival allowance.

\paragraph{Step 3: Linearization and Matrix Assembly.}
The Log-Sum-Exponential (LSE) weights are
$w_i=e^{-\eta(p_i^*-25)}/\sum_{j=1}^2 e^{-\eta(p_j^*-25)}$
(Equation~\eqref{eq:denotational_semantics}). At the query point,
$w_1\approx2.061\times10^{-9}$ and $w_2=1-w_1$. The temporal offsets are constant
in the state, so $\nabla\llbracket\psi\rrbracket(x^*)=[w_1,w_2,0,0]^T$. The
first-order Taylor row in Theorem~\ref{thm:polyhedral_compilation} is
therefore $A=-\nabla\llbracket\psi\rrbracket(x^*)^T$, with
$b_{\text{eff}}=\llbracket\psi\rrbracket(x^*)
-\nabla\llbracket\psi\rrbracket(x^*)^Tx^* \approx-29.425188\text{ m}$. At
$x^*=x(0)$, the Taylor displacement vanishes, so the row checks the
complete field exactly at this point.

\paragraph{Step 4: Witness Generation and Temporal Fix.}
The positive row residual is
$r=Ax(0)-b_{\text{eff}}=-\llbracket\psi\rrbracket(x^*)
\approx11.425188\text{ m}$. 
The dominant weight $w_2$ identifies the second target for repair, and
Algorithm~\ref{alg:temporal_repair} selects its enclosing $F$ for
extension. Its first candidate uses
Equation~\eqref{eq:settling_time_integral}:
$ \Delta T^*=\tau_s(r)=\frac{r^{1-\beta}}{c(1-\beta)} \approx14.623436\text{ s}$. This gives
$\psi'=F_{[0,16.623436]}G_{[0,8]}P$ at the displayed precision. Recompiling
the complete candidate yields
$ \llbracket\psi'\rrbracket(x^*) =-r+\mu_0(2+\Delta T^*,0)-\mu_0(2,0)
\approx1.035294\text{ m}>0$. 

\section{Analysis of the Decision Procedure}
\label{sec:theor-guar}

In this section we establish soundness, quantified incompleteness, and
derive complexity bounds for the feasibility decision procedure.

\subsection{Formal Soundness Guarantees}
\label{sec:soundness}

In this section, we establish the formal soundness of the proposed
geometric decision procedure. 


\begin{lemma}[Nested-Level Temporal Horizon Additivity]
\label{lem:nested_composition}
Let $\Pi$ be the root-to-leaf paths in the syntax tree of $\psi$,
where each temporal node $O_i$ has interval $[a_i,b_i]$.
The syntactic evaluation horizon is
$T_{\text{eff}}(\psi)\triangleq
\max_{\pi\in\Pi}\sum_{O_i\in\pi}b_i$.
Constants, atomic predicates, and Boolean operators contribute zero.
A trajectory defined on $[\tau,\tau+T_{\text{eff}}(\psi)]$
provides sufficient coverage to evaluate $\psi$ at $\tau$.
This horizon uses upper endpoints, while the individual temporal
intervals remain in force.
\end{lemma}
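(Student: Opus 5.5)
The argument is a structural induction on the pre-processed NNF syntax tree of $\psi$. The invariant to carry is the following: for every subformula $\phi$ and every $\tau\ge0$, the robust value $\rho(\phi,x,\tau)$ defined in Section~\ref{sec:robust-semantics} depends only on the restriction of $x$ to $[\tau,\tau+T_{\text{eff}}(\phi)]$. Equivalently, two trajectories that agree on that window give identical robustness at $\tau$. This yields sufficiency of coverage directly, since any extension of a trajectory defined on the window produces the same value. We also record that $T_{\text{eff}}(\phi)$ is the maximum over root-to-leaf paths of $\phi$'s subtree of the sum of upper endpoints. Concretely, $T_{\text{eff}}$ is $0$ at leaves, the maximum of the children's values at $\wedge$ and $\vee$, and $b+T_{\text{eff}}(\phi')$ at $O_{[a,b]}\phi'$. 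This recursive characterization follows immediately from the path-sum definition, because a path through a Boolean node continues into exactly one child.

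\textbf{Base cases.} For $\top$ and $\bot$ the value is constant, so no samples are needed. For an atom or negated atom, $\rho=\pm\mu(x(\tau))$ uses only $x(\tau)$, and the window $[\tau,\tau]$ has horizon $0$.

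\textbf{Boolean step.} For $\phi_1\wedge\phi_2$ and $\phi_1\vee\phi_2$, the min or max is taken at the same time $\tau$. By the induction hypothesis each $\rho(\phi_i,x,\tau)$ is determined on $[\tau,\tau+T_{\text{eff}}(\phi_i)]$, and both windows lie inside $[\tau,\tau+\max_i T_{\text{eff}}(\phi_i)]$.

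\textbf{Temporal step.} For $O_{[a,b]}\phi'$ with $O\in\{F,G\}$, the value is a max or min over $\tau'\in[\tau+a,\tau+b]$ of $\rho(\phi',x,\tau')$. By the induction hypothesis, each inner value needs $x$ only on $[\tau',\tau'+T_{\text{eff}}(\phi')]\subseteq[\tau,\tau+b+T_{\text{eff}}(\phi')]$, using $\tau'\ge\tau+a\ge\tau$ and $\tau'\le\tau+b$. The union over all $\tau'$ therefore lies in the stated window. Hence the whole sup or inf is fixed once $x$ is fixed there. This also explains the final sentence of the lemma. Only $b$ enters the horizon as an upper bound on coverage, while the lower endpoint $a$ still restricts the range of $\tau'$ in the semantics; the lemma is a coverage statement, not an equivalence of intervals.

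The main point needing care is that the horizon is attained, so the max or min is well defined, and that the inclusion argument holds for continuous rather than sampled $\tau'$. I would state the invariant as exact equality of robustness values for agreeing trajectories. This avoids any need for continuity: the sup and inf over $[\tau+a,\tau+b]$ coincide for the two trajectories pointwise in $\tau'$, whether or not they are attained, so the extended-real values agree. I would also note that the bound may be loose, since $a_i$ could be used to shift the window start, but it is sufficient, which is all the lemma claims.
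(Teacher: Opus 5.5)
Your proposal is correct and follows the same route as the paper: the path-sum horizon is identified with the recursive rule (add $b$ at each temporal node, take the maximum at Boolean nodes, zero at leaves), and that recursive horizon gives coverage. The difference is that the paper simply cites the standard horizon construction of Maler and Nickovic for the coverage claim, whereas you prove it by structural induction on the robust semantics, correctly handling continuous $\tau'$ and possibly unattained suprema and infima; the one case the paper's proof mentions that you omit is general negation, which is immaterial under the NNF convention and trivial anyway, since $\rho(\neg\phi,x,\tau)=-\rho(\phi,x,\tau)$ uses exactly the same window.
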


\begin{proof}
  The standard horizon construction adds $b$ for each temporal operator,
  takes the maximum under conjunction or disjunction, and leaves the
  horizon unchanged under negation~\cite[Sec.~2]{maler2004monitoring}.
  Expanding this construction yields the stated maximum path sum.
\end{proof}

\begin{assumption}[Joint Dynamic Realizability]
\label{assum:low_level_tracking}
For the preprocessed specification $\psi$ and fixed $\eta>0$,
assume every $x_0$ satisfying $\llbracket\psi\rrbracket(x_0)\ge0$
admits one controller acting on the complete compiled surrogate,
generating an admissible $u_\psi$ and an absolutely continuous
$x_\psi$ on $[0,T_{\text{eff}}(\psi)]$, with $x_\psi(0)=x_0$.
For each temporal obligation $\theta=O_{[a,b]}\phi$ with
$\llbracket\theta\rrbracket(x_\psi(\tau))\ge0$ and
$0\le\tau\le T_{\text{eff}}(\psi)-T_{\text{eff}}(\theta)$,
the same control realizes the corresponding controlled-preimage
transition at $\tau+a$. Its finite trace
$s(t)=\llbracket\phi\rrbracket(x_\psi(t))$ is absolutely continuous and
satisfies $\dot s\ge-c\,\operatorname{sgn}(s)|s|^\beta$ almost everywhere
on $[\tau+a,\tau+b]$, using the inversion parameters.
These requirements hold jointly, without slack, for overlapping
obligations and their continuations. 
\end{assumption}

Controller abstraction enables reuse across control architectures; this
assumption defines the contract with the execution controller. Our
experiments construct and continuously validate a realization based on
the higher-order controller of Xiao et al.~\cite{xiao2021highorder}.

\begin{theorem}[Geometric Soundness]
\label{thm:geometric_soundness}
Consider the continuous-time dynamics and supported STL grammar of
Section~\ref{sec:stl-syntax-semantics}, after pre-processing as in
Section~\ref{sec:pre-processing}. For a nonconstant specification,
let $A$ and $b_{\text{eff}}$
be compiled for the complete specification $\psi$ at
$x^*=x_0=x(0)$ as in Theorem~\ref{thm:polyhedral_compilation}.
Under Assumption~\ref{assum:low_level_tracking},
\[
  \begin{aligned}
    A x_0\le b_{\text{eff}}
    &\quad\Longleftrightarrow\quad
      \llbracket\psi\rrbracket(x_0)\ge0\\
    &\quad\Longrightarrow\quad
      \exists\,u(\cdot)\text{ admissible}:\;
      \rho(\psi,x^{x_0,u},0)\ge0.
  \end{aligned}
\]
Here $x^{x_0,u}$ is the trajectory starting from $x_0$ under $u$, and
$\rho$ is evaluated over continuous temporal intervals. Thus a
successful local matrix check certifies an admissible control with
nonnegative continuous robustness over the complete specification
horizon $T_{\text{eff}}(\psi)$.
\end{theorem}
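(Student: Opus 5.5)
The equivalence $Ax_0\le b_{\text{eff}}\iff\llbracket\psi\rrbracket(x_0)\ge0$ is exactly Theorem~\ref{thm:polyhedral_compilation} with $x^*=x_0$: the Taylor displacement vanishes, so the row check is the nonnegativity test of the full composed field. The substantive part is the implication. I would prove it by structural induction on the pre-processed NNF syntax tree, using a stronger invariant that holds along the trajectory and not only at time $0$.

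Fix the controller supplied by Assumption~\ref{assum:low_level_tracking} for $x_0$. It produces $u_\psi$ and $x_\psi$ on $[0,T_{\text{eff}}(\psi)]$. I would prove the following claim: for every subformula $\theta$ of $\psi$ and every $\tau\in[0,T_{\text{eff}}(\psi)-T_{\text{eff}}(\theta)]$,
\[
\llbracket\theta\rrbracket(x_\psi(\tau))\ge0\;\Longrightarrow\;\rho(\theta,x_\psi,\tau)\ge0.
\]
Lemma~\ref{lem:nested_composition} guarantees that the trajectory covers every time at which $\theta$ is queried, so all robustness values are defined. Taking $\theta=\psi$ and $\tau=0$ then gives the theorem with $u=u_\psi$.

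\textbf{Base cases.} For an affine literal, $\llbracket\mu\ge0\rrbracket=\mu=\rho$ and $\llbracket\neg(\mu\ge0)\rrbracket=-\mu=\rho$, so the claim is immediate. Constants are removed in pre-processing.

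\textbf{Boolean cases.} The log-sum-exp conjunction is a lower bound on the minimum: $-\eta^{-1}\ln(e^{-\eta f_1}+e^{-\eta f_2})\le\min(f_1,f_2)$. Hence nonnegativity of the conjunction field forces both children to be nonnegative, and the induction hypothesis applies to each. For disjunction, $\eta^{-1}\ln(e^{\eta f_1}+e^{\eta f_2})-\eta^{-1}\ln2\le\max(f_1,f_2)$, because the log term is at most $\max+\eta^{-1}\ln 2$. So some child is nonnegative and $\rho=\max\ge0$. Both children inherit the time window, since Boolean nodes add nothing to $T_{\text{eff}}$.

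\textbf{Temporal cases.} Let $\theta=O_{[a,b]}\phi$ with $\llbracket\theta\rrbracket(x_\psi(\tau))\ge0$. The field is a signed distance to $Pre_a(\cdot)$, so nonnegativity places $x_\psi(\tau)$ in the pre-image. The assumption says the same control realizes the transition, so $x_\psi(\tau+a)$ lies in $\mathcal S_{b-a}$ (for $F$) or in $\mathcal B_{b-a}$ (for $G$). The trace $s(t)=\llbracket\phi\rrbracket(x_\psi(t))$ satisfies $\dot s\ge-c\,\mathrm{sgn}(s)|s|^\beta$ on $[\tau+a,\tau+b]$. I would then apply a comparison principle for this scalar differential inequality against the Bhat--Bernstein solution, with Lemma~\ref{lem:level_set_inversion} supplying the closed form.

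For $G$, $s(\tau+a)\ge\mu_0(b-a,0)$. The comparison solution needs time exactly $b-a$ to reach $0$, so $s\ge0$ on the whole interval. For $F$, the capture-basin condition bounds the initial deficit $-s(\tau+a)$ by the inversion level; the liveness direction then yields some $t\in[\tau+a,\tau+b]$ with $s(t)\ge0$. In either case the induction hypothesis applies to $\phi$ at those times $t$. This is legitimate because $t\le\tau+b$ and $\tau\le T_{\text{eff}}(\psi)-T_{\text{eff}}(\theta)$ together give $t\le T_{\text{eff}}(\psi)-T_{\text{eff}}(\phi)$. Taking the min or max over the interval then gives $\rho(\theta,x_\psi,\tau)\ge0$.

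\textbf{Main obstacle.} I expect two difficulties in the temporal step. The first is the comparison argument itself. The right-hand side is non-Lipschitz at $s=0$, so uniqueness fails and the standard comparison lemma needs care. I would first argue directly by contradiction: take the first time $s$ drops below the comparison curve and integrate $\dot s/|s|^\beta$, which is valid on sets where $s\ne0$ and uses absolute continuity. The second is the liveness direction. The stated inequality is a lower bound on $\dot s$, and it must be read as guaranteeing progress toward the target, as the assumption's ``corresponding controlled-preimage transition'' indicates. I would make this explicit by invoking the assumption's joint clause. That clause also covers overlapping obligations, ensuring that the single trajectory serves every subformula simultaneously, which the Boolean cases require.
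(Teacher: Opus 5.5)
Your proposal is correct and follows essentially the same route as the paper: the equivalence comes from Theorem~\ref{thm:polyhedral_compilation} at $x^*=x_0$, followed by structural induction on the same trajectory-wide invariant $\llbracket\theta\rrbracket(x_\psi(\tau))\ge0\Rightarrow\rho(\theta,x_\psi,\tau)\ge0$ for $\tau+T_{\text{eff}}(\theta)\le T_{\text{eff}}(\psi)$, with the same literal, LSE-bound, and capture-basin/buffer cases and the same horizon bookkeeping via Lemma~\ref{lem:nested_composition}. Your second ``obstacle'' is not one: for $s<0$ the assumed bound already reads $\dot s\ge c|s|^\beta>0$, so $\tfrac{d}{dt}(-s)^{1-\beta}\le-c(1-\beta)$ together with the $\mathcal S_{b-a}$ condition gives arrival within $b-a$ directly, and you need no extra appeal to the joint clause for progress.
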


\begin{proof}
Suppose $A x_0\le b_{\text{eff}}$. At the compilation query point,
Theorem~\ref{thm:polyhedral_compilation} gives
$\llbracket\psi\rrbracket(x_0)\ge0$. Fix the single admissible trajectory
$x_\psi$ supplied by Assumption~\ref{assum:low_level_tracking}.
By structural induction, we establish
\[
  \llbracket\phi\rrbracket(x_\psi(\tau))\ge0
  \quad\Longrightarrow\quad
  \rho(\phi,x_\psi,\tau)\ge0
\]
for each subformula $\phi$ and every $\tau\ge0$ with
$\tau+T_{\text{eff}}(\phi)\le T_{\text{eff}}(\psi)$, always on this
same trajectory.

\paragraph{Base Cases.}
Truth holds trivially; falsehood has a negative geometric value, so
the implication holds vacuously. For an atomic predicate, the geometric
field and STL robustness both equal $\mu(x_\psi(\tau))$; for a negated
atomic literal, both equal $-\mu(x_\psi(\tau))$.

\paragraph{Logical Operators.}
The conservative LSE conjunction is bounded above by the minimum of
the geometric values of its two subformulas. Its nonnegativity therefore
implies that both subformula values are nonnegative; their induction
hypotheses give nonnegative
conjunction robustness. Similarly, the normalized LSE disjunction is
bounded above by the maximum of the geometric values of its two
subformulas. At least one subformula value must then be nonnegative,
and its induction hypothesis establishes
nonnegative disjunction robustness.

\paragraph{Temporal Operators.}
Assume the induction hypothesis holds for the inner formula $\phi$;
an identically $+\infty$ inner field is immediate.
For $F_{[a,b]}\phi$, Equation~\eqref{eq:2} and the assumed realization
of the controlled preimage place $x_\psi(\tau+a)$ in
$\mathcal S_{b-a}(\llbracket\phi\rrbracket)$. If the inner field is
already nonnegative there, take $\tau'=\tau+a$. Otherwise, its deficit
is within the capture bound, so the assumed fractional differential
bound and Equation~\eqref{eq:settling_time_integral} guarantee a
settling time no greater than $b-a$. Hence the inner field is
nonnegative at some $\tau'\in[\tau+a,\tau+b]$. Its induction hypothesis
gives $\rho(\phi,x_\psi,\tau')\ge0$, establishing
$\rho(F_{[a,b]}\phi,x_\psi,\tau)\ge0$.

For $G_{[a,b]}\phi$, Equation~\eqref{eq:1} and the same controller
assumption place $x_\psi(\tau+a)$ in
$\mathcal B_{b-a}(\llbracket\phi\rrbracket)$. By the safety-buffer
inversion in Equation~\eqref{eq:spatial_inversion}, the assumed lower
differential bound prevents the inner field from becoming negative
during $[\tau+a,\tau+b]$. The induction hypothesis therefore gives
$\rho(\phi,x_\psi,\tau')\ge0$ at every time in that interval, establishing
$\rho(G_{[a,b]}\phi,x_\psi,\tau)\ge0$. The case $a=b$ reduces to the
nonnegative field value at the single evaluation time.

For either temporal operator, Lemma~\ref{lem:nested_composition} ensures
that every invoked child horizon fits within the available trajectory,
since $\tau'\le\tau+b$ and
$T_{\text{eff}}(O_{[a,b]}\phi)=b+T_{\text{eff}}(\phi)$ for $O\in\{F,G\}$.
Applying the induction result to $\psi$ at $\tau=0$ gives
$\rho(\psi,x_\psi,0)\ge0$. Its admissible control signal witnesses the
claimed continuous-time feasibility.
Since pre-processing preserves standard quantitative robustness, the
same guarantee holds for the original input specification.
\end{proof}

Ghosh et al.~\cite{ghosh2016diagnosis} formulate synthesis and repair
over a discrete-time signal from a sampled system model
(Section~2.1 of their paper), with soundness relative to that synthesis
model. Satisfaction of the sampled constraints alone does not establish
STL satisfaction between samples without additional conditions on the
intersample behavior. Theorem~\ref{thm:geometric_soundness} instead
certifies continuous-time feasibility from the local matrix check under
the stated controller assumption.

\subsection{Temporal Repair and its Formal Guarantees}
\label{sec:repair_guarantees}

\paragraph{Description of Algorithm~\ref{alg:temporal_repair}.}
The algorithm proposes one temporal update at a time. At each attempt,
line~\ref{alg:repair_evaluate} uses the current local compilation at
$x_0=x(0)$ and computes $y=A x_0-b_{\text{eff}}$ and
$r=\max_i\max\{y_i,0\}$. A row attaining $r$ identifies the failed
requirement. For subformula values $z_i$, conjunction uses
$w_i=e^{-\eta z_i}/\sum_j e^{-\eta z_j}$ and disjunction uses
$w_i=e^{\eta z_i}/\sum_j e^{\eta z_j}$
(Equation~\eqref{eq:denotational_semantics}). Products along Boolean
paths give priorities for temporal occurrences.
Line~\ref{alg:repair_select} chooses the largest-weight ($w_{i}$)
eligible $F$, preferring the outermost occurrence on a tied branch, then
syntax order.

Eligibility accounts for smoothing as well as negative values. Start
with target zero for the complete certificate. At a Boolean node with
$N_{\mathrm B}$ operands, add $\ln(N_{\mathrm B})/\eta$ to the operand
targets (Lemma~\ref{lem:lse_conservatism}). In the algebraic
implementation, passing a target inside a temporal node subtracts its
signed offset: $+\mu_0(b-a,0)$ for $F$ and $-\mu_0(b-a,0)$ for $G$. An
occurrence is eligible while its value is below its target and it has a
permitted edit. Thus positive subformula values may still need repair
when their complete LSE composition is negative.

Line~\ref{alg:repair_candidate} proposes
$\Delta T=\tau_s(r)=r^{1-\beta}/(c(1-\beta))$
(Equation~\eqref{eq:settling_time_integral}) and clips it to the edit
bounds. It keeps $a$ fixed, increasing $b$ for $F$ or decreasing it for
$G$ with $b'\ge a$. 
Line~\ref{alg:repair_compile} checks the complete recompiled formula
(Theorem~\ref{thm:polyhedral_compilation}); line~\ref{alg:repair_accept}
accepts only a nonnegative certificate. Otherwise,
line~\ref{alg:repair_retain} retains effective improvements, recomputes
the deficit and weights, and skips ineffective occurrences.

When eligible $F$ attempts are exhausted, line~\ref{alg:repair_select}
tests all $F$ endpoints at their permitted maxima. A successful probe
continues the $F$ search; a failed probe permits weight-prioritized $G$
contraction. The probe selects the search branch; actual updates still
change one occurrence at a time. By
Lemma~\ref{lem:repair_monotonicity}, failure rules out an $F$-only
certificate within these edit bounds, without establishing physical
necessity of contraction.

We use the following examples to show working of
Algorithm~\ref{alg:temporal_repair}. In these examples we use $c=0.6$,
$\beta=0.2$, and $\eta=10$. The additional UAV cases have
$\|u\|_\infty\le3\text{ m/s}^2$ and $\|v\|_\infty\le5\text{ m/s}$, with up to
$3000\text{ s}$ extension per $F$ and minimum $G$ duration
$0.01\text{ s}$.

\begin{enumerate}
\item \textbf{The first candidate succeeds.} In the motivating example
  (Sections~\ref{sec:motivating_example}
  and~\ref{sec:compilation_motivating_example}), the dominant weight
  identifies the second target and its enclosing $F$. The
  $14.623436\text{ s}$ extension gives complete certificate
  $1.035294\text{ m}>0$, so line~\ref{alg:repair_accept} returns
  immediately, preserving $G_{[0,8]}$.

\item \textbf{Another $F$ update is needed.} From $x_0=0$, consider
  $\psi_{\mathrm{two}}=F_{[1,1.4]}G_{[0,2]}(p_1\ge4) \wedge
  F_{[1,1.5]}G_{[0,2]}(p_2\ge4.6)$. The branch values
  (\(\llbracket\psi_i\rrbracket(x(0))\), for $i \in \{1, 2\}$) are
  $-3.323158$ and $-3.882270\text{ m}$, with weights $0.003717$ and
  $0.996283$. The second deadline is therefore extended first, to
  $7.666812\text{ s}$. The complete value improves from $-3.882642$ to
  $-3.323158\text{ m}$ but remains negative.
  Line~\ref{alg:repair_retain} keeps this update; refreshed weights
  select the first branch. Extending its deadline to $6.845021\text{ s}$
  gives complete value $0.133023\text{ m}>0$. Both two-second dwells are
  preserved.

\item \textbf{$F$ extensions cannot repair the remaining failure.} Let
  $x_0=(0.996,0,0,0.17,0,0)^T$ and
  $\psi_{\mathrm{safety}}=G_{[0,0.2]}(p_1\le1) \wedge F_{[1,3]}G_{[0,2]}(p_2\ge6)$,
  where $p_1\le1$ abbreviates $1-p_1\ge0$. The arrival branch is initially
  dominant. Extending its deadline to $9.939509\text{ s}$ leaves
  complete value (\(\llbracket\psi_{\mathrm{safety}}\rrbracket(x(0))\))
  $-0.049480\text{ m}$. Even the maximum-$F$ probe leaves the
  independent wall certificate negative. One $G$ contraction to
  $[0,0.0119383]$ then gives complete value $0.002351\text{ m}>0$, while
  preserving the arrival dwell $G_{[0,2]}$.

\end{enumerate}

$K$ bounds candidate attempts, including unsuccessful candidates;
maximum-$F$ probes are additional certificate evaluations. Exhausting
the attempt budget or permitted occurrences returns \texttt{repair not
  found}. Returning \texttt{repair not found} does not imply physical
infeasibility.

\begin{lemma}[Monotonicity of Temporal Relaxation]
\label{lem:repair_monotonicity}
Fix the dynamics, admissible controls, inversion parameters, smoothing
parameter, and distance convention. For a pre-processed NNF
specification, let $\psi'$ increase an $F$ upper endpoint or decrease
a $G$ upper endpoint, keeping its lower endpoint fixed and its interval
nonempty. Then
$\llbracket\psi'\rrbracket(x)\ge\llbracket\psi\rrbracket(x)$.
\end{lemma}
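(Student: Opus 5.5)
The argument has two parts. First, changing the edited occurrence can only raise its own compiled field. Second, every operator above it in the syntax tree is monotone nondecreasing in its argument fields, so the increase propagates to the root. Both parts are pointwise in $x$ and use the fixed dynamics, controls, $c,\beta,\eta$ and distance convention.

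\textbf{Local step.} Write the edited occurrence as $O_{[a,b]}\phi$ with $f=\llbracket\phi\rrbracket$; the edit leaves $f$ unchanged. By Lemma~\ref{lem:level_set_inversion}, $\mu_0(T,0)=(c(1-\beta)T)^{1/(1-\beta)}$ is nondecreasing in $T\ge0$, since $1/(1-\beta)>1$.

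\emph{Liveness.} Increasing $b$ to $b'>b$ with $a$ fixed gives $\mathcal S_{b-a}(f)\subseteq\mathcal S_{b'-a}(f)$, because the threshold on $-f$ grows.

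\emph{Safety.} Decreasing $b$ to $b'\in[a,b)$ gives $\mathcal B_{b-a}(f)\subseteq\mathcal B_{b'-a}(f)$, because the lower bound on $f$ shrinks.

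In both cases $a$ is fixed, so the same controlled preimage $Pre_a$ is applied. Set inclusion is preserved by preimages: any control steering $x$ into the smaller set also steers it into the larger set. Hence $Pre_a(\text{old})\subseteq Pre_a(\text{new})$.

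\textbf{Distance step.} For the fixed signed-distance convention (positive inside, negative outside), $K\subseteq K'$ implies $l_K\le l_{K'}$ pointwise. I would check three cases:
\begin{itemize}
\item For $x$ inside $K$, the distance to the complement of $K'$ is at least the distance to the complement of $K$.
\item For $x$ outside $K'$, the distance to $K'$ is at most the distance to $K$, and both values are negated.
\item For $x\in K'\setminus K$, the sign is nonnegative for $K'$ and nonpositive for $K$.
\end{itemize}
This gives $\llbracket O_{[a,b']}\phi\rrbracket\ge\llbracket O_{[a,b]}\phi\rrbracket$. The same comparison also covers the closed-form offset form used in the implementation, $f\pm\mu_0(\cdot,0)$ with $Pre_0$ the identity, where monotonicity of $\mu_0$ gives the inequality directly.

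\textbf{Propagation.} I would proceed by structural induction over the ancestors of the edited node; siblings are unchanged.
\begin{itemize}
\item The LSE conjunction $-\eta^{-1}\ln(e^{-\eta z_1}+e^{-\eta z_2})$ and the normalized LSE disjunction are both nondecreasing in each $z_i$, including the extended values $\pm\infty$.
\item For an enclosing temporal node, $\mathcal S_{b-a}(f)$ and $\mathcal B_{b-a}(f)$ are monotone in $f$: a pointwise larger $f$ yields larger sets. The preimage and distance arguments above then apply verbatim, with the inner field changing in place of the interval.
\end{itemize}
This yields $\llbracket\psi'\rrbracket(x)\ge\llbracket\psi\rrbracket(x)$ for all $x$.

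\textbf{Main obstacle.} The hardest part is the signed-distance monotonicity under inclusion. It needs care at boundaries and with degenerate sets, namely empty sets or all of $\mathbb R^d$, where the distances take the values $\pm\infty$. I would handle these with the convention $l_\emptyset=-\infty$ and $l_{\mathbb R^d}=+\infty$, under which the inequality still holds.
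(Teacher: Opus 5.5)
Your proposal is correct and follows the paper's proof essentially step for step. Monotonicity of $\mu_0(T,0)$ in $T$ enlarges $\mathcal S_{b-a}(f)$ or $\mathcal B_{b-a}(f)$, $Pre_a$ preserves inclusion, and the inside-positive signed distance is monotone under inclusion. The LSE operators and the temporal inversions, both monotone in their inner fields, then carry the increase up the NNF tree; the paper does the distance step in one line via $l_C=\operatorname{dist}(\cdot,C^c)-\operatorname{dist}(\cdot,C)$, which is exactly what your three-case check unpacks.
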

\begin{proof}
The threshold $\mu_0(T,0)$ increases with $T$. Thus extending $F$
enlarges $\mathcal S_{b-a}(f)$, while shortening $G$ enlarges
$\mathcal B_{b-a}(f)$. The controlled preimage preserves set inclusion.
For $C\subseteq D$, using the same metric and state domain, the
inside-positive signed distance
$l_C(x)=\operatorname{dist}(x,C^c)-\operatorname{dist}(x,C)$ satisfies
$l_C(x)\le l_D(x)$: enlarging the set cannot decrease distance to its
complement or increase distance to the set. If $f\le g$, both
$\mathcal S_T(f)\subseteq\mathcal S_T(g)$ and
$\mathcal B_T(f)\subseteq\mathcal B_T(g)$, so temporal inversions also
preserve the increase through ancestors. Both LSE Boolean operators
are increasing in every argument. Propagating these inequalities
through the NNF syntax tree proves the complete-field inequality.
\end{proof}

\begin{theorem}[Witness Soundness of Iterative Temporal Repair]
\label{thm:witness_soundness}
Suppose Algorithm~\ref{alg:temporal_repair} returns
\texttt{repair certified} with complete specification $\psi'$.
Let $A'$ and $b_{\text{eff}}'$ be its fresh local compilation at
$x^*=x_0=x(0)$. Under Assumption~\ref{assum:low_level_tracking} for
$\psi'$, there exists an admissible $u(\cdot)\in\mathcal U$ such that
$\rho(\psi',x^{x_0,u},0)\ge0$ over continuous temporal intervals and the
complete repaired horizon $T_{\text{eff}}(\psi')$.
Recompilation may change both $A$ and $b_{\text{eff}}$; the certificate
is evaluated using the repaired pair.
\end{theorem}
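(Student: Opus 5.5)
The plan is to reduce the statement to Theorem~\ref{thm:geometric_soundness} applied to the repaired specification $\psi'$. The key observation is that Algorithm~\ref{alg:temporal_repair} returns \texttt{repair certified} only at line~\ref{alg:repair_accept}. That line is reached only after line~\ref{alg:repair_compile} has freshly recompiled the complete candidate $\psi'$ at $x^*=x_0$ and found a nonnegative complete certificate. So the first step is to argue from the control flow that the returned $\psi'$ is exactly the formula whose fresh pair $(A',b_{\text{eff}}')$ satisfies $A'x_0\le b_{\text{eff}}'$. No stale row from an earlier iterate, and no retained partial update from line~\ref{alg:repair_retain}, is used in this acceptance test.

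Next, I check that $\psi'$ lies in the scope of the earlier results. The edits only raise an $F$ upper endpoint or lower a $G$ upper endpoint, with the lower endpoint fixed and $b'\ge a$. Hence every interval of $\psi'$ still satisfies $0\le a\le b'<\infty$. The syntax tree, the NNF shape, and the affine atoms are unchanged, so $\psi'$ is a nonconstant pre-processed specification in the supported grammar. Theorem~\ref{thm:polyhedral_compilation} therefore applies to $\psi'$ at $x^*=x_0$. Under the assumption that the certificate value and gradient are finite there, it gives $A'x_0\le b_{\text{eff}}' \iff \llbracket\psi'\rrbracket(x_0)\ge0$. Lemma~\ref{lem:nested_composition} yields the repaired horizon $T_{\text{eff}}(\psi')$, computed from the new upper endpoints.

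Finally, I invoke Assumption~\ref{assum:low_level_tracking} for $\psi'$, which the statement supplies. This is the essential point: the assumption for the original $\psi$ says nothing about $\psi'$, since both the intervals and the horizon have changed. Theorem~\ref{thm:geometric_soundness} applied to $\psi'$ then produces an admissible $u$ with $\rho(\psi',x^{x_0,u},0)\ge0$ on continuous intervals over $[0,T_{\text{eff}}(\psi')]$. That is the claim.

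I expect the main obstacle to be the bookkeeping around recompilation rather than any analysis. The argument must make clear that the certificate is evaluated with the repaired pair $(A',b_{\text{eff}}')$, and not with the original $A$ or with values inferred through Lemma~\ref{lem:repair_monotonicity}. Monotonicity only explains why retained updates do not hurt; it is not needed for soundness. I would also remark that clipping to edit bounds and the maximum-$F$ probes do not matter here. A probe is an extra certificate evaluation that never returns \texttt{repair certified} by itself, so every accepted output passes through line~\ref{alg:repair_compile}.
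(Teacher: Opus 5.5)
Your proposal is correct and follows the paper's proof. Acceptance at line~\ref{alg:repair_accept} requires the fresh check $A'x_0\le b_{\text{eff}}'$; Theorem~\ref{thm:polyhedral_compilation} converts this to $\llbracket\psi'\rrbracket(x_0)\ge0$, and Theorem~\ref{thm:geometric_soundness} under Assumption~\ref{assum:low_level_tracking} for $\psi'$ supplies the admissible control. Your extra checks are sound bookkeeping that the paper leaves implicit: that $\psi'$ remains a nonconstant NNF specification with valid intervals, and that maximum-$F$ probes never return \texttt{repair certified} by themselves.
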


\begin{proof}
  Acceptance requires the complete check $A'x_0\le b_{\text{eff}}'$. By
  Theorem~\ref{thm:polyhedral_compilation} at the repaired compilation
  query, this is equivalent to $\llbracket\psi'\rrbracket(x_0)\ge0$.
  Under Assumption~\ref{assum:low_level_tracking} for $\psi'$,
  Theorem~\ref{thm:geometric_soundness} supplies one admissible control
  with nonnegative continuous robustness for $\psi'$ over its complete
  repaired horizon.
\end{proof}


\subsection{Incompleteness}
\label{sec:incompleteness}

The procedure provides a sufficient feasibility certificate under
Assumption~\ref{assum:low_level_tracking}, but may reject continuously
feasible states. We quantify the Boolean smoothing contribution
separately from conservatism in the temporal inversions.

\begin{lemma}[LSE Boolean Conservatism]
  \label{lem:lse_conservatism}
  Let $N_{\mathrm{B}}$ be the number of operands in a Boolean composition,
  and let $\psi_1,\ldots,\psi_{N_{\mathrm{B}}}$ be the subformulas combined by a conjunction
  or disjunction, and let $z_i=\llbracket\psi_i\rrbracket(x)$ be their
  finite geometric-semantic values at a fixed state $x$, after temporal
  compilation. For
  $\eta>0$, a conjunction uses
  $\widetilde B=-\eta^{-1}\ln\bigl(\sum_{i=1}^{N_{\mathrm{B}}} e^{-\eta z_i}\bigr)$,
  whereas a normalized disjunction uses
  $\widetilde B=\eta^{-1}\ln\bigl(\sum_{i=1}^{N_{\mathrm{B}}} e^{\eta z_i}\bigr)
  -\ln(N_{\mathrm{B}})/\eta$. Let $B=\min_i z_i$ for conjunction and
  $B=\max_i z_i$ for disjunction. In either case,
  $0\le B-\widetilde B\le\ln(N_{\mathrm{B}})/\eta$.
  Thus $B\ge\ln(N_{\mathrm{B}})/\eta$ guarantees $\widetilde B\ge0$, and any
  rejection attributable solely to this smoothing occurs within
  $0\le B<\ln(N_{\mathrm{B}})/\eta$. The bound applies to one Boolean composition;
  the binary operators in Equation~\eqref{eq:denotational_semantics}
  correspond to $N_{\mathrm{B}}=2$.
\end{lemma}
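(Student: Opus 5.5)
The plan is to prove the standard log-sum-exp sandwich bound for each operator, reducing the disjunction case to the conjunction case by a sign flip. First I will treat the conjunction. Set $m=B=\min_i z_i$ and factor $e^{-\eta m}$ out of the sum:
\[
  \sum_{i=1}^{N_{\mathrm B}} e^{-\eta z_i}=e^{-\eta m}\sum_{i=1}^{N_{\mathrm B}} e^{-\eta(z_i-m)}.
\]
Since $z_i-m\ge0$, each term $e^{-\eta(z_i-m)}$ lies in $(0,1]$, and at least one term equals $1$ (an index attaining the minimum). Hence the inner sum $S$ satisfies $1\le S\le N_{\mathrm B}$. Taking $-\eta^{-1}\ln$ gives $\widetilde B=m-\eta^{-1}\ln S$. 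Therefore $B-\widetilde B=\eta^{-1}\ln S\in[0,\ln(N_{\mathrm B})/\eta]$, using $\eta>0$ and monotonicity of $\ln$.

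Next, for the normalized disjunction I set $M=B=\max_i z_i$ and factor out $e^{\eta M}$ in the same way. The inner sum $S'=\sum_i e^{\eta(z_i-M)}$ again satisfies $1\le S'\le N_{\mathrm B}$. This gives $\eta^{-1}\ln\sum_i e^{\eta z_i}=M+\eta^{-1}\ln S'$. Subtracting the normalization $\ln(N_{\mathrm B})/\eta$ yields
\[
  B-\widetilde B=\frac{\ln N_{\mathrm B}-\ln S'}{\eta}\in\Bigl[0,\frac{\ln N_{\mathrm B}}{\eta}\Bigr].
\]
Equivalently, one can apply the conjunction computation to $-z_i$ and note that the constant shift turns the upper LSE bound into the stated lower one. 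Finiteness of the $z_i$ is used only so that the factoring and logarithms are well defined.

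The consequences then follow directly. If $B\ge\ln(N_{\mathrm B})/\eta$, then $\widetilde B\ge B-\ln(N_{\mathrm B})/\eta\ge0$. Conversely, a rejection caused solely by smoothing means $\widetilde B<0\le B$, and together with the bound this forces $0\le B<\ln(N_{\mathrm B})/\eta$. Specializing to $N_{\mathrm B}=2$ recovers the binary operators of Equation~\eqref{eq:denotational_semantics}.

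There is no real obstacle here; the only care needed is in the disjunction case, where the sign of the normalization must come out as $B\ge\widetilde B$ rather than the reverse.
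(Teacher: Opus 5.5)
Your proof is correct and takes the same route as the paper, which cites the standard log-sum-exp sandwich $\max_i y_i\le\ln\sum_i e^{y_i}\le\max_i y_i+\ln N_{\mathrm B}$ from Boyd--Vandenberghe and applies it to $y_i=-\eta z_i$ for conjunction and to $y_i=\eta z_i$ for disjunction before subtracting the normalization. Factoring out the extreme term to get an inner sum in $[1,N_{\mathrm B}]$ is exactly the proof of that cited bound, so your version is simply self-contained.
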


\begin{proof}
  Apply the standard LSE bound~\cite[Sec.~3.1.5, p.~72]{boyd2004convex}
  to $-\eta z_i$ for conjunction, then negate and divide by $\eta$. This gives
  $B-\ln(N_{\mathrm{B}})/\eta\le\widetilde B\le B$. For disjunction, applying
  the same bound to $\eta z_i$ gives
  $B\le\eta^{-1}\ln\bigl(\sum_i e^{\eta z_i}\bigr) \le B+\ln(N_{\mathrm{B}})/\eta$;
  subtracting $\ln(N_{\mathrm{B}})/\eta$ yields the same interval for
  $\widetilde B$. The claims follow.
\end{proof}

\begin{proposition}[Incompleteness of the Decision Procedure]
\label{prop:incompleteness}
For every finite $\eta>0$, the geometric decision procedure is
not complete in general: some continuously feasible initial states
fail its geometric certificate and corresponding local
polyhedral check.
\end{proposition}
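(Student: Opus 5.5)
Incompleteness follows from one explicit counterexample for each finite $\eta>0$. The natural source is the LSE smoothing gap in Lemma~\ref{lem:lse_conservatism}: a conjunction whose exact min-robustness lies in $[0,\ln 2/\eta)$ is rejected by the smoothed field, yet can be continuously satisfied. We use a purely Boolean, atemporal specification, so that neither the temporal inversions nor Assumption~\ref{assum:low_level_tracking} play any role.

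\textbf{Construction.} Take any system from Section~\ref{sec:stl-syntax-semantics} with $d\ge1$. Pick an affine predicate $\mu(x)=g^Tx-h$ with $g\neq0$, and let
\[
  \psi=(\mu(x)\ge0)\wedge(\mu(x)\ge0).
\]
This $\psi$ is already in NNF and is nonconstant. Choose $x_0$ with $\mu(x_0)=0$, which is possible because $g\neq0$.

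\textbf{Continuous feasibility.} There are no temporal operators, so $\rho(\psi,x,0)=\min(\mu(x_0),\mu(x_0))=0\ge0$. This holds for every trajectory starting at $x_0$, including the one produced by any admissible control, for example $u\equiv0$ if $0\in\mathcal U$, or any element of $\mathcal U$ otherwise. Hence $x_0$ is continuously feasible.

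\textbf{Certificate failure.} By Equation~\eqref{eq:denotational_semantics},
\[
  \llbracket\psi\rrbracket(x_0)=-\eta^{-1}\ln\bigl(2e^{0}\bigr)=-\ln2/\eta<0.
\]
The field is finite and has gradient $g$ at $x_0$, so Theorem~\ref{thm:polyhedral_compilation} applies. It shows that $Ax_0\le b_{\text{eff}}$ fails, so both the geometric certificate and the local polyhedral check reject $x_0$.

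\textbf{Scope of the counterexample.} A reader might object to a duplicated conjunct. The same argument works for two distinct predicates that are both tight at $x_0$, or for one tight predicate and one with slack smaller than $\ln2/\eta$. More generally, take any point with $0\le\min(\mu_1,\mu_2)<\ln 2/\eta$; Lemma~\ref{lem:lse_conservatism} gives $\widetilde B<0$ whenever the two values are close. No real difficulty arises. The one point needing care is making sure the witness is robust to the paper's conventions: Boolean "feasibility" here means $\rho\ge0$, so boundary robustness $0$ counts as feasible. If one prefers strictly positive robustness, pick $\mu(x_0)=\epsilon$ with $0<\epsilon<\ln2/\eta$. The field is then $\epsilon-\ln2/\eta<0$, which again gives rejection of a strictly feasible state.
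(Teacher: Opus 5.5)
Your proof is correct and takes essentially the paper's route: an atemporal conjunction whose two operands have equal values, so the LSE field equals $B-\ln 2/\eta$ while the robustness is $B\ge 0$, with the rejection carried to the polyhedral check by Theorem~\ref{thm:polyhedral_compilation}. The paper uses the distinct predicates $x_1\ge0$ and $x_2\ge0$ under $\dot x=u$ at $x_0=(\zeta,\zeta)^T$ with $\zeta=\ln(2)/(2\eta)$, which is your strictly positive $\epsilon$-variant; this avoids relying on the boundary case $\rho=0$ that your main construction uses.
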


\begin{proof}
  Consider $\dot x=u$, $\|u\|_\infty\le1$, and
  $\psi=(x_1\ge0)\wedge(x_2\ge0)$. Choose $\zeta=\ln(2)/(2\eta)>0$ and
  $x_0=(\zeta,\zeta)^T$. The admissible input $u(t)=0$ gives
  $\rho(\psi,x,0)=\zeta>0$. For these equal subformula values, the upper error
  bound of Lemma~\ref{lem:lse_conservatism} is attained:
  $\llbracket\psi\rrbracket(x_0) =\zeta-\ln(2)/\eta=-\zeta<0$. The geometric
  certificate therefore rejects this feasible state.
  Theorem~\ref{thm:polyhedral_compilation} gives the same rejection for
  the polyhedral check compiled at $x^*=x_0$.
\end{proof}

The bound $\ln(N_{\mathrm{B}})/\eta$ concerns Boolean smoothing with the
compiled subformula values held fixed; it is not a bound on total
incompleteness. 

\subsection{Computational Complexity}
\label{sec:computational_complexity}

We separate evaluation of an already compiled certificate from the
compilation and repeated checks required by temporal repair; $n=1$ for
one complete formula, while $n>1$ denotes a collection of complete
certificate checks. Costs are arithmetic operations under
fixed-precision evaluation.

\begin{theorem}[Horizon-Independent Decision Procedure]
\label{thm:constant_time}
Given the locally compiled representation $A\in\mathbb R^{n\times d}$ and
$b_{\text{eff}}\in\mathbb R^n$ at the query state $x_0=x(0)$: (1)
Computing $y=A x_0-b_{\text{eff}}$ and checking
$A x_0\le b_{\text{eff}}$ require $O(nd)$ operations. (2) Given $y$,
evaluating $r=\max_i\max\{y_i,0\}$ and, when $r>0$, recording a
maximally violated row require $O(n)$ additional operations. The
combined evaluation cost is $O(nd)$, independent of
$T_{\text{eff}}(\psi)$ for fixed compiled dimensions $n,d$. Compilation is
accounted for separately.
\end{theorem}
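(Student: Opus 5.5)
The proof will be a direct operation count over the stored data $A\in\mathbb R^{n\times d}$, $b_{\text{eff}}\in\mathbb R^n$ and $x_0\in\mathbb R^d$. No part of the argument needs to open the compilation of Theorem~\ref{thm:polyhedral_compilation}. Once the certificate is compiled, the temporal structure of $\psi$ enters only through the numerical values of the entries of $A$ and $b_{\text{eff}}$. It does not enter through their sizes. This observation is what yields horizon independence. We work in the fixed-precision arithmetic model, so each addition, multiplication and comparison costs $O(1)$ regardless of the magnitude of the entries, which may depend on the interval endpoints.

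\textbf{Step (1).} For each row $i=1,\ldots,n$ we compute $y_i=\sum_{j=1}^d A_{ij}(x_0)_j-(b_{\text{eff}})_i$. This takes $d$ multiplications, $d-1$ additions and one subtraction, so $O(d)$ operations per row and $O(nd)$ in total. Checking $Ax_0\le b_{\text{eff}}$ is then the same as checking $y_i\le0$ for all $i$. That adds $n$ comparisons, a cost of $O(n)\subseteq O(nd)$ because $d\ge1$.

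\textbf{Step (2).} Given $y$, we run a single linear scan. It keeps a running maximum $r$, initialized to $0$, together with the index of the row attaining it. At each row we compute $\max\{y_i,0\}$ and compare it with the current $r$, at $O(1)$ cost per row. The scan returns $r=\max_i\max\{y_i,0\}$, and if $r>0$ it also returns a row attaining $r$, which is a maximally violated row. This costs $O(n)$ extra operations. Adding the two steps gives $O(nd)+O(n)=O(nd)$.

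\textbf{Horizon independence.} We then note that neither count involves $T_{\text{eff}}(\psi)$ (Lemma~\ref{lem:nested_composition}), any interval endpoint $a_i,b_i$, or any number of samples. The inputs of the routine are only $A$, $b_{\text{eff}}$ and $x_0$, whose dimensions are $n$ and $d$. So for fixed $n,d$ the bound is a constant independent of the horizon.

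\textbf{Main obstacle.} The delicate step is making this independence rigorous rather than a matter of notation. That means fixing the cost model so that long horizons, which may produce entries of large magnitude such as $\mu_0(b-a,0)$ from Equation~\eqref{eq:spatial_inversion}, do not increase the per-operation cost. It also means stating explicitly that the recursive field evaluation, the temporal inversions and the gradient computation belong to compilation, which the statement accounts for separately. For the motivating example we remark that $n=1$, consistent with Theorem~\ref{thm:polyhedral_compilation}, and so the check costs $O(d)$.
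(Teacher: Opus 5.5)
Your proposal is correct and matches the paper's proof: an $O(nd)$ matrix--vector product, $O(n)$ subtractions and comparisons, and a single $O(n)$ scan for the maximum positive residual and its row index, with no temporal sampling loop. Your explicit fixed-precision cost model and your assignment of field evaluation, temporal inversions, and gradients to compilation spell out what the paper states in the section preamble and in the theorem's final sentence.
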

\begin{proof}
Matrix-vector multiplication costs $O(nd)$; subtraction and comparison
cost $O(n)$. A single scan of the retained residuals finds their maximum
positive part and its row index in $O(n)$. These operations depend on
the compiled dimensions, with no temporal sampling loop.
\end{proof}

\begin{theorem}[Computational Complexity of Temporal Repair]
\label{thm:repair_complexity}
For Algorithm~\ref{alg:temporal_repair}, let $m$ be the syntax-tree
size, $C_0$ the initial compilation cost, and $C_{\mathrm{upd}}$ an
upper bound on preparing and compiling a candidate and any maximum-$F$
probe per attempt. Let $n,d$ bound the compiled dimensions throughout
repair. Then: (1) Given $r$, $c>0$, and $\beta\in(0,1)$, computing
$\Delta T=r^{1-\beta}/(c(1-\beta))$ and clipping it to the permitted endpoint bounds
cost $O(1)$ scalar evaluations. (2) From the evaluated subformula
values, propagating weights and selecting a temporal occurrence cost
$O(m)$ per attempt; each complete compiled inclusion and deficit check
costs $O(nd)$. (3) With an attempt budget $K\ge1$, the total repair cost
is $O(C_0+K(C_{\mathrm{upd}}+nd+m))$.

The scalar candidate and compiled checks introduce no dependence on
temporal discretization; total repair cost also includes compilation
and the number of attempts.
\end{theorem}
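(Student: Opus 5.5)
The argument is a per-line cost accounting of Algorithm~\ref{alg:temporal_repair}, with the three claims handled in the order stated. The total is then assembled by summing per-attempt costs over at most $K$ iterations of the loop.

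\textbf{Scalar candidate (part 1).} Computing $\Delta T=r^{1-\beta}/(c(1-\beta))$ takes one subtraction, one real power, one multiplication and one division. Under the fixed-precision cost model of Section~\ref{sec:computational_complexity}, each counts as a unit scalar operation. Clipping against the permitted upper bound for $F$, or the lower bound $b'\ge a$ and the minimum duration for $G$, takes at most two comparisons. Hence line~\ref{alg:repair_candidate} costs $O(1)$ scalar evaluations.

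\textbf{Per-attempt bookkeeping (part 2).} I assume the subformula values $z_i$ are retained from the most recent compilation; this is the hypothesis ``from the evaluated subformula values''.
\begin{enumerate}
\item Computing the LSE weights $w_i$ at a Boolean node costs time proportional to its arity. Summed over the tree, this is $O(m)$, since total arity is bounded by the number of edges.
\item Products along Boolean paths are obtained by one top-down traversal. Each node multiplies its parent's priority by its own weight, so this is also $O(m)$.
\item Eligibility targets are propagated in the same pass. Each Boolean node adds $\ln(N_{\mathrm B})/\eta$ and each temporal node subtracts a stored signed offset $\pm\mu_0(b-a,0)$, both constant per node.
\item Selecting the largest-weight eligible occurrence, with the outermost-then-syntax-order tie-break, is a single scan, $O(m)$.
\end{enumerate}
The complete inclusion and deficit check on line~\ref{alg:repair_evaluate} and line~\ref{alg:repair_accept} is exactly Theorem~\ref{thm:constant_time}, so it costs $O(nd)$ with $n,d$ bounded throughout repair by hypothesis.

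\textbf{Total (part 3).} The initial compilation is paid once as $C_0$. Each of the at most $K$ attempts incurs:
\begin{itemize}
\item candidate preparation and recompilation, together with any maximum-$F$ probe, all absorbed into $C_{\mathrm{upd}}$ by its definition;
\item a constant number of compiled checks, each $O(nd)$;
\item weight propagation and selection, $O(m)$;
\item the $O(1)$ scalar candidate from part 1.
\end{itemize}
The retention step on line~\ref{alg:repair_retain} only updates one stored endpoint, or marks an occurrence as skipped, which is $O(1)$ and in any case within $O(m)$. Early exits via \textbf{break} or \texttt{repair certified} only reduce the count. Summing gives $O(C_0+K(C_{\mathrm{upd}}+nd+m))$. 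None of these terms depends on a time grid, which yields the final remark of the statement.

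\textbf{Main obstacle.} The hard part is the accounting of the maximum-$F$ probe, which must not escape the per-attempt budget. The text says the probe is an ``additional certificate evaluation'' that is not counted against $K$. It could in principle be triggered repeatedly without consuming attempts. My first approach is to note that the probe is invoked only when eligible $F$ attempts are exhausted, inside an iteration of the loop, so at most one probe occurs per loop iteration. That single probe is then charged to that iteration's $C_{\mathrm{upd}}$ (compilation) plus one $O(nd)$ check. If needed, I would make this explicit by stating that each loop iteration performs at most one probe, which matches the definition of $C_{\mathrm{upd}}$ as covering ``any maximum-$F$ probe per attempt.''
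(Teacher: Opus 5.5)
Your proposal is correct and follows the same route as the paper's proof. Both argue that the scalar candidate takes $O(1)$ operations, that weight propagation and selection are an $O(m)$ tree traversal, and that each compiled check costs $O(nd)$ by Theorem~\ref{thm:constant_time}; both then charge candidate and probe compilation to $C_{\mathrm{upd}}$ and sum over at most $K$ attempts plus $C_0$. Your line-by-line breakdown and your explicit treatment of the maximum-$F$ probe simply spell out what the paper's short proof leaves implicit.
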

\begin{proof}
The candidate uses a fixed number of scalar operations, including one
fractional power; clipping requires a fixed number of comparisons.
Weight propagation and occurrence selection visit the syntax tree in
$O(m)$ operations. Theorem~\ref{thm:constant_time} gives $O(nd)$ for
each compiled check. Charging candidate and probe preparation and
compilation to $C_{\mathrm{upd}}$, and summing over at most $K$
attempts, gives the stated bound together with initial compilation.
\end{proof}


\section{Experimental Evaluation}
\label{sec:experiments}

We evaluate diagnostic repair, computational cost, and conservative
classification on a UAV with continuous 6D double-integrator dynamics,
$\dot p=v$, $\dot v=u$, and $x=[p^T,v^T]^T$. The per-axis bounds are
$\|u\|_\infty\le a_{\max}=3\text{ m/s}^2$ and
$\|v\|_\infty\le v_{\max}=5\text{ m/s}$. The compiler
(Section~\ref{sec:compilation}) uses $c=0.6$, $\beta=0.2$, and $\eta=10$.
The geometric implementation is restricted to position literals
$g_i^Tp-h_i\ge0$ with
$\|g_i\|_1=1$ and $\sigma_jg_{ij}\ge0$ for one common
$\sigma\in\{-1,1\}^3$, using relative inside-positive signed distances on
$\mathcal X=\mathbb R^3\times[-v_{\max},v_{\max}]^3$ in the norm
$\|(\Delta p,\Delta v)\|_\kappa=\|\Delta p\|_\infty+\kappa\|\Delta v\|_\infty$,
where $\kappa>0$ is at least the largest root-to-leaf sum of temporal
lower endpoints.

Decision-procedure benchmarks use Python 3.12.7 and Gurobi 13.0.3 on an
Intel i7-14700 Linux machine with approximately $31$ GiB RAM. The proposed
and MILP~\cite{ghosh2016diagnosis} procedures receive the same affinity to 24
CPUs, per-library or solver thread ceilings of 24, a $24$ GiB
resident-memory guard, and a $120\text{ s}$ case wall budget including
startup and validation. 

Reported algorithm times include the complete pipeline from compilation
to feasibility check, excluding startup and external continuous audits.
Proposed geometric technique measurements use seven batches of 100
complete repairs per repetition; procedure times in tables and plots are
medians across three repetitions. Unfinished runs retain their status
and are excluded from completion-time medians. Geometric acceptance uses
the complete local matrix certificate. Independent trajectory audits
validate continuous satisfaction; Phase~1 additionally tests a
constructive realization of a controller in
Assumption~\ref{assum:low_level_tracking}.

The evaluation is performed in three phases: Phase~1: Diagnostic Witness
and Temporal Repair. Phase~2: Horizon Scalability and Syntactic
Complexity. Phase~3: Soundness and Conservative Rejections.

\subsection{Phase 1: Temporal Fix and Diagnostic Witness Validation}
\label{sec:exp_phase1}

\textbf{Scenario and Specification Setup:} The initial state is \\
$x_0=[20,18,5,0,0,0]^T$, and
$P\equiv(p_1\ge25)\wedge(p_2\ge25)\wedge(p_3\ge20)$ defines a one-sided target region.
Arrival during $[1,3]$ followed immediately by eight seconds of dwell is
specified by the Phase~1 specification $\psi=F_{[1,3]}G_{[0,8]}P$.
For initial velocity $v$, the maximum admissible displacement is
\begin{equation}
  \begin{aligned}
    D(T;v)&=v\tau+\tfrac12 a_{\max}\tau^2
                +v_{\max}(T-\tau),\\
    \tau&=\min\{T,(v_{\max}-v)/a_{\max}\}.
  \end{aligned}
  \label{eq:uav_displacement}
\end{equation}
Thus $D(3;0)=10.8333\text{ m}$, and the vertical target's physical
shortfall is $15-D(3;0)=4.1667\text{ m}$.

\begin{table}[t]
  \caption{Phase~1 repair comparison. MILP uses $\Delta t=0.1$ s. Diagnostics
    are optimal predicate slack for MILP and the compiled certificate
    deficit for geometry. Continuous validity covers the MILP witness
    and both proposed analytical and HOCLBF geometric trajectories;
    procedure times exclude trajectory generation.}
  \label{tab:phase1_results}
  \centering
  \footnotesize
  \begin{tabular}{l c c}
    \toprule
    \textbf{Metric} & \textbf{MILP} & \textbf{Geometric} \\
    \midrule
    Slack / certificate deficit & $4.1700$ m & $17.9252$ m \\
    Repaired $F$ interval & $[3.9,11]$ s & $[1,23.9667]$ s \\
    Repaired $G$ interval & $[3.9,11]$ s & $[0,8]$ s \\
    Dwell duration & $7.1$ s & $8.0$ s \\
    Continuous trajectory & Violates & Both satisfy \\
    Procedure time (ms) & $4451.320$ & $0.264923$ \\
    \bottomrule
  \end{tabular}
\end{table}

\textbf{Diagnostic Witness and Temporal Repair Comparison:}
Let $q=(25,25,20)^T$, $z_j=p_{0,j}+D(1;v_{0,j})-q_j$, and
$S_\eta(z)=-\eta^{-1}\ln(\sum_{j=1}^3e^{-\eta z_j})$.
Applying the fractional operators in
Equation~\eqref{eq:denotational_semantics} gives
\begin{equation}
  \llbracket\psi\rrbracket(x_0) =S_\eta(z)-\mu_0(8,0)+\mu_0(2,0).
  \label{eq:experimental_fractional_field}
\end{equation}
The controlled preimage accounts for the first second; the remaining
liveness window and safety dwell contribute their respective level-set
inversions. At the initial state, the complete field is
$-17.9252\text{ m}$, giving $r=17.9252\text{ m}$.
Algorithm~\ref{alg:temporal_repair} computes
$\Delta T=r^{1-\beta}/(c(1-\beta))=20.9667\text{ s}$ and extends only the outer
deadline. Its first candidate,
$\psi'_{\mathrm{geom}}=F_{[1,23.9667]}G_{[0,8]}P$, has complete certificate
value $1.2120\text{ m}>0$ and passes the fresh local matrix check. 

The MILP pipeline returns
$\psi'_{\mathrm{MILP}}=F_{[3.9,11]}G_{[3.9,11]}P$ at
$\Delta t=0.1\text{ s}$. It changes both intervals and reduces dwell from
$8$ to $7.1$ seconds. Its maximum optimal predicate slack is
$4.17\text{ m}$, while the geometric deficit includes the fractional
safety buffer. Table~\ref{tab:phase1_results} reports a median
$4.4513\text{ s}$ MILP runtime versus $0.264923\text{ ms}$ for the
proposed geometric decision procedure, a ratio of approximately
$1.68\times10^4$. At $\Delta t=0.02\text{ s}$ all three MILP runs leave IIS
diagnosis unresolved; at $0.005\text{ s}$ all three reach the memory
guard.

\begin{figure*}[htbp]
  \centering
  \begin{subfigure}[b]{0.32\textwidth}
    \centering
    \includegraphics[width=\textwidth]{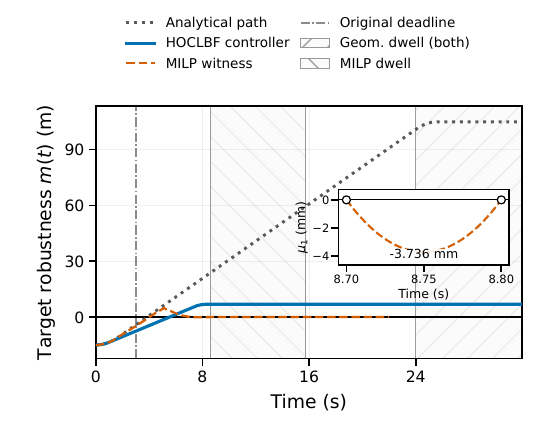}
    \caption{Target robustness: dotted proposed analytical path, solid HOCLBF
      controller, and dashed MILP witness. Hatching marks dwells;
      inset: violation between samples.}
    \label{fig:phase1_trajectory}
  \end{subfigure}
  \hfill
  \begin{subfigure}[b]{0.32\textwidth}
    \centering
    \includegraphics[width=\textwidth]{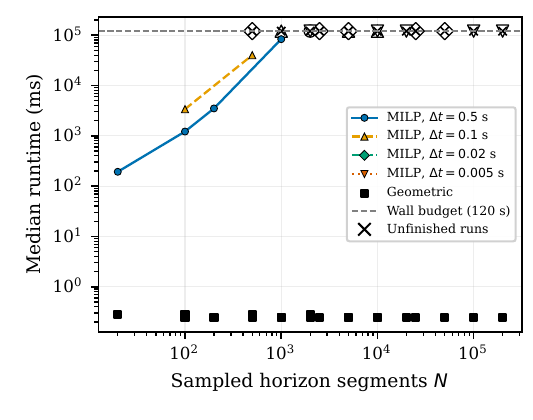}
    \caption{Median complete algorithm times versus original horizon
      segments $N=T/\Delta t$. Crosses indicate unfinished runs,
      placed at the wall-budget reference height.}
    \label{fig:phase2_scalability}
  \end{subfigure}
  \hfill
  \begin{subfigure}[b]{0.32\textwidth}
    \centering
    \includegraphics[width=\textwidth]{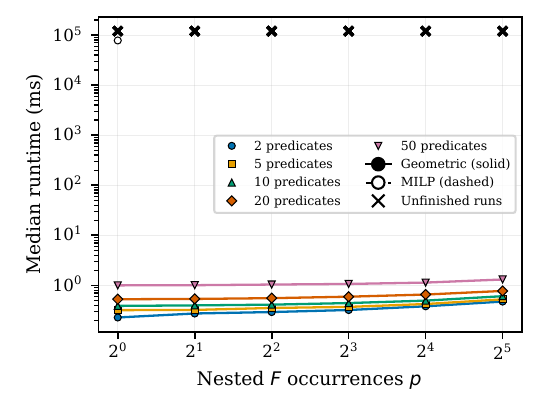}
    \caption{Median complete algorithm times versus nested $F$ count
      $p$. Markers identify predicate count; crosses mark unfinished
      MILP runs, rather than completion times.}
    \label{fig:phase3_nesting}
  \end{subfigure}
  \caption{Continuous trajectories and controller realization, horizon
    scalability, and syntactic complexity on the double-integrator benchmarks.}
  \label{fig:combined_experiments}
\end{figure*}

\textbf{Continuous-Trajectory Validation:}
Figure~\ref{fig:phase1_trajectory} compares the proposed geometric
analytical feasible path, the controller-generated path, and the MILP
witness using instantaneous target robustness
$m(t)=\rho(P,x,t)=\min\{p_1(t)-25,p_2(t)-25,p_3(t)-20\}$, in metres.
Negative values before the selected dwell are permitted by $F$. Opposite
hatching marks the shared geometric dwell $[23.9667,31.9667]$ and the
MILP dwell $[8.6,15.7]$, obtained by adding $[3.9,11]$ to its outer
witness time $4.7$. The proposed geometric analytical path accelerates
to the speed limit, cruises until the repaired deadline, brakes, and
holds; its minimum dwell robustness is $100.6667\text{ m}$.

A separate one-thread run adapts the higher-order control (HOCLBF)
construction of Xiao et al.~\cite{xiao2021highorder} to the complete
dwell field $s(p)=f(p)-\mu_0(8,0)$, where $f(p)=S_\eta(p-q)$, with
$0.01\text{ s}$ updates and unchanged $c,\beta,\eta$. The first second realizes
the controlled preimage to $[21.5,19.5,6.5,3,3,3]^T$. The controller
reaches the dwell buffer at $7.291813\text{ s}$, then brakes and holds.
All $3198$ QPs solve without slack; maximum acceleration and speed are
$3\text{ m/s}^2$ and $3\text{ m/s}$. 
Continuous interval audits verify the fractional reaching and safety
bounds and satisfaction throughout the eight-second dwell. Its minimum
dwell robustness is $6.897332\text{ m}$. 

For the MILP witness, exact constant-acceleration interpolation of
$\mu_1(t)=p_1(t)-25$ gives
$\mu_1(8.7)=\mu_1(8.8)=0$ and $\mu_1(8.75)=-0.003736\text{ m}$. The full
continuous audit considers every outer witness time and finds that none
satisfies the repaired formula. Thus the returned MILP control witness
satisfies the sampled model while violating continuous-time STL. Both
geometric paths satisfy the repaired formula continuously; the HOCLBF
run additionally realizes the controller conditions for this instance,
providing an empirical check of Theorems~\ref{thm:geometric_soundness}
and~\ref{thm:witness_soundness}.

\paragraph{Iterative Repair Validation.}
We validate Algorithm~\ref{alg:temporal_repair} on cases requiring
repeated deadline extensions and safety contraction. A multi-target case
combines several target requirements for one UAV, each with its own
arrival deadline and two-second dwell. All requirements must hold on the
same trajectory. Starting at $x_0=0$, the three-target specification is
$ \psi_3=F_{[1,1.4]}G_{[0,2]}(p_1\ge4) \wedge F_{[1,1.5]}G_{[0,2]}(p_2\ge4.6) \wedge
F_{[1,1.6]}G_{[0,2]}(p_3\ge5.2)$. The two-target case uses the first two
conjuncts. The six-target case adds three requirements with bounds
$p_1\ge5.8$, $p_2\ge6.4$, and $p_3\ge7$, retaining the same respective
temporal intervals. We allow 64 attempts and up to $3000\text{ s}$
extension per $F$, in Algorithm~\ref{alg:temporal_repair}. Lower
endpoints remain fixed and $G$ contraction is permitted.

\begin{table}[t]
  \caption{Geometric $F/G$ update counts and interval changes
    (maximum $F$ extension / total $G$ reduction, s), with median
    algorithm times (ms). MILP uses $\Delta t=0.1$ s;
    $\dagger$ marks acceptance of the original sampled formula.}
  \label{tab:iterative_repair}
  \centering
  {\footnotesize\setlength{\tabcolsep}{3pt}
  \begin{tabular}{l c r r c}
    \toprule
    Case & $F/G$ & Geom. & MILP & Changes (s) \\
    \midrule
    Two targets & $2/0$ & $0.421$ & $519.1$ & $6.167/0$ \\
    Three targets & $3/0$ & $0.632$ & $1051.4$ & $6.865/0$ \\
    Six targets & $6/0$ & $1.510$ & $3125.4$ & $9.013/0$ \\
    Safety fallback & $1/1$ & $0.530$ & $165.3^\dagger$ & $6.940/0.188$ \\
    \bottomrule
  \end{tabular}}
\end{table}

For the proposed geometric decision procedure, the multi-target cases
need two, three, and six $F$ updates, respectively, preserving all
dwells. Each update follows the largest eligible weight; improvements
are retained and weights recomputed
(Lemma~\ref{lem:repair_monotonicity}). With an attempt budget $K=1$,
the two-target case returns \texttt{repair not found}.

For safety fallback, $x_0=(0.996,0,0,0.17,0,0)^T$ and
$\psi=G_{[0,0.2]}(p_1\le1)\wedge F_{[1,3]}G_{[0,2]}(p_2\ge6)$. After
$F$-only repair fails, one $G$ contraction ends safety at
$0.0119383\text{ s}$; the arrival deadline becomes $9.93951\text{ s}$.
The arrival dwell stays two seconds and continuous robustness is
$0.0021843\text{ m}$. MILP at $\Delta t=0.1\text{ s}$ accepts the original
sampled formula, but its trajectory violates continuous safety.

Two positive branch values of $0.02\text{ m}$ also test smoothing-only
rejection: their LSE conjunction is $-0.04931\text{ m}$, requiring two
$F$ updates. Across these cases and the main UAV example, all 18
geometric runs satisfy continuous STL. Median
runtimes are $0.265$--$1.510\text{ ms}$, including all repair checks
(Theorem~\ref{thm:repair_complexity}). MILP completes 30 of 54 runs: 26
witnesses violate continuous STL, two have positive robustness, and two
are within numerical tolerance of zero.

\subsection{Phase 2: Horizon Scalability and Syntactic Complexity}
\label{sec:exp_phase2}

\subsubsection*{Part 1: Horizon Length and Discretization Scalability}

\paragraph{Experimental Setup.}
We test every combination of \\ $T\in\{10,50,100,500,1000\}\text{ s}$ and
$\Delta t\in\{0.5,0.1,0.02,0.005\}\text{ s}$. The original formula is
$F_{[1,3]}G_{[0,T-3]}P$: total horizon grows through dwell while the
arrival deadline stays fixed. Each total horizon has three geometric
repetitions; each of the 20 horizon--sampling combinations has three MILP
repetitions, with the shared resource limits described before.

\paragraph{Results and Temporal Complexity Verification.}
All 15 geometric runs finish, with median runtimes
$0.245$--$0.280\text{ ms}$ for complete repair and
$0.0587$--$0.0643\text{ ms}$ for separately measured compiled queries.
Figure~\ref{fig:phase2_scalability} plots geometry at the corresponding
MILP horizon segment counts up to $N=200{,}000$, its measurements
require no sampling period. At $\Delta t=0.5\text{ s}$, MILP completes
$T=10,50,100,500\text{ s}$. At $\Delta t=0.1\text{ s}$, $T=10$ and
$50\text{ s}$ complete in $3.379$ and $39.963\text{ s}$. No finer-grid
configuration completes the full pipeline.

Of 60 MILP repetitions, 18 complete, 29 exhaust the wall budget, six
reach the memory guard, five leave IIS diagnosis unresolved, and two
leave repaired-model verification unresolved. In
Figure~\ref{fig:phase2_scalability}, crosses pool unfinished statuses at
the budget reference height; they are not completion-time measurements.
Theorem~\ref{thm:constant_time} establishes horizon independence for
compiled checks seen in the figure for the proposed geometric decision
procedure.

\subsubsection*{Part 2: Syntactic Complexity and Nested Formulae}

\paragraph{Experimental Setup.}
We vary nested $F$ count over $p\in\{1,2,4,8,\\ 16,32\}$ and distinct
predicate count over $k'\in\{2,5,10,20,50\}$, with one final
$G_{[0,8]}$. Each $F$ interval is $[1/p,3/p]$, keeping the standard STL
arrival interval $[1,3]$ and total horizon fixed. 
Although the nested $F$ chain is equivalent to $F_{[1,3]}$, the MILP
baseline retains its uncollapsed syntax to measure encoding costs.
MILP uses $\Delta t=1/32\text{ s}$. Both methods have three repetitions for
each of the 30 configurations.

\paragraph{Results and Syntactic Complexity Verification.}
All 90 geometric decision procedure runs complete, with runtime medians
$0.228$--$1.310\text{ ms}$; the separately measured compiled queries take
$0.0548$--$0.2711\text{ ms}$. Compilation traverses $p+k'+2$ syntax-tree
nodes, combines consecutive flow maps, and retains each temporal
occurrence's fractional inversion contribution.
Figure~\ref{fig:phase3_nesting} shows the full algorithm cost, whose
growth with formula size includes compilation, weight propagation, and
repair checks.

Only $(p,k')=(1,2)$ completes all three MILP repetitions, with median
runtime $78.054\text{ s}$. The other 87 remain unfinished: 45 at the
original solve, 25 at IIS diagnosis, 15 at the wall budget, and two at
optimal slack computation. Their crosses indicate status at the
reference height, rather than measured completion times.

\subsection{Phase 3: Soundness and Conservative Rejections}
\label{sec:exp_phase3}

\paragraph{Monte Carlo Empirical Verification.}
We sample $1000$ initial states with seed 42 and reuse this population
for every $\eta\in\{1,2,5,10,20,50\}$, running the proposed technique.
Positions are independently uniform over
$[10,100]\times[10,100]\times[5,95]\text{ m}$ and velocities over
$[-0.5,2]^3\\ \text{ m/s}$, the benchmark's controller-valid domain. For
the Phase~1 specification in Section~\ref{sec:exp_phase1}, the independent physical reference
is $m_{\mathrm{phys}}(x_0)=\min_j\{p_{0,j}+D(3;v_{0,j})-q_j\}$. It
identifies 895 feasible and 105 infeasible states; all 1000 receive a
geometric classification at every $\eta$.

To separate smoothing from temporal-inversion conservatism, we use
$z_j(x_0)=p_{0,j}+D(1;v_{0,j})-q_j$ in
Equation~\eqref{eq:experimental_fractional_field}. On the sampled
velocity range, $D(1;v)=v(1\text{ s})+1.5\text{ m}$. Define the local
LSE loss $\ell_\eta(x_0)=\min_j z_j(x_0)-S_\eta(z(x_0))$.
Lemma~\ref{lem:lse_conservatism} gives
$0\le\ell_\eta(x_0)\le\delta_\eta=\ln(3)/\eta$, with $\eta$ in
$\text{m}^{-1}$. 

\begin{figure}[t]
  \centering
  \includegraphics[width=0.95\columnwidth]{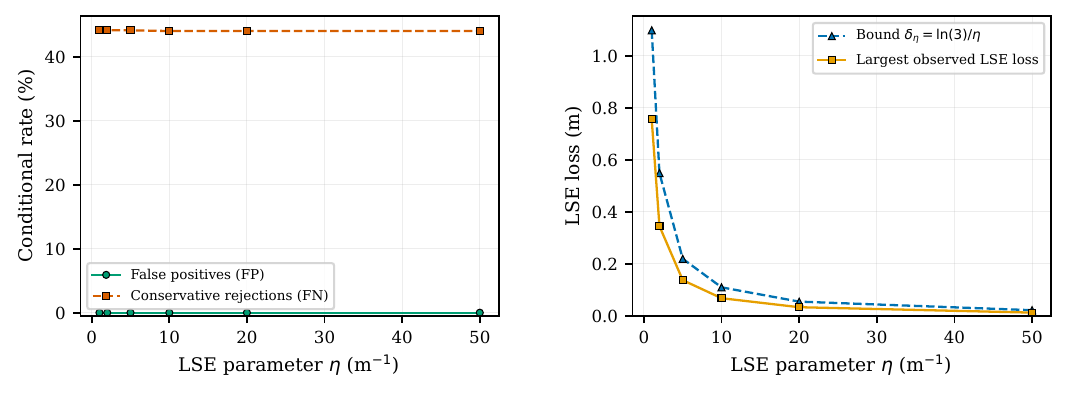}
  \caption{Validation across 1000 paired 6D states as $\eta$ varies.
    Left: conditional false-positive and conservative-rejection rates.
    Right: the LSE bound $\delta_\eta=\ln(3)/\eta$ and the largest observed
    smoothing loss $\max_{x_0}\ell_\eta(x_0)$ over the sampled states.}
  \label{fig:phase3_verification}
\end{figure}

\paragraph{Results Analysis.}
Figure~\ref{fig:phase3_verification} (left) reports false positive rate
$\mathrm{FPR}$ and false negative rate $\mathrm{FNR}$, as percentages of
physically infeasible and feasible states, respectively.
There are zero observed false positives for every $\eta$, consistent with
Theorem~\ref{thm:geometric_soundness} on the tested domain.
False-negative rates are $44.134\%$ for $\eta=1,2,5$ and $44.022\%$ for
$\eta=10,20,50$. The rejections include temporal-inversion conservatism,
rather than only the LSE smoothing strip of
Lemma~\ref{lem:lse_conservatism}. Every observed LSE loss in the right
panel satisfies that lemma's bound. Increasing $\eta$ therefore reduces
smoothing loss while 394 physically feasible states still fail the
temporal certificate. These results illustrate
Proposition~\ref{prop:incompleteness} and separate its quantified
Boolean contribution from the additional conservatism of the temporal
inversions.

\section{Related Work}
\label{sec:related_work}

Our geometric decision procedure sits at the intersection of temporal
logic synthesis, runtime verification, fault diagnosis, and continuous
satisfiability checking.

\textbf{STL Synthesis and Control:} Standard Signal Temporal Logic (STL)
control synthesis encodes continuous dynamics into Mixed-Integer Linear
Programs (MILPs)~\cite{raman2014synthesis,
  belta2019formal,ghosh2016diagnosis}. However, requiring binary
variables at every discrete step causes computational complexity to
scale multiplicatively with horizon length~\cite{kurtz2022mixed}. Our
geometric approach evaluates temporal constraints as continuous spatial
boundaries at time zero, with horizon-independent checks for fixed
compiled dimensions. This supports deployment as both an offline MILP
pre-computation screen, a real-time feasibility gate for Receding
Horizon Control (RHC), or for a LLM in the loop planning.

\textbf{Runtime Verification:} When offline synthesis is computationally
prohibitive, runtime tools like Breach~\cite{donze2010breach} and
RTAMT~\cite{yamaguchi2024rtamt} compute quantitative robustness to
evaluate ongoing trace satisfaction. Diverging from trace monitoring,
our approach certifies synthesis feasibility from a given initial
state prior to execution and prescribes actionable temporal repairs
rather than exclusively providing post-hoc feedback.

\textbf{Diagnosis, Repair, and Infeasible Sets:} As automated natural
language-to-STL translation pipelines emerge~\cite{he2022deepstl},
resolving synthesized conflicting specifications is critical. Existing
diagnosis tools extract Irreducible Infeasible Sets
(IIS)~\cite{ghosh2016diagnosis}, generate
counterexamples~\cite{kress2018synthesis}, or extract tableau-based
unsatisfiable cores~\cite{zamponi2026stlsat}.
STLSat~\cite{zamponi2026stlsat} checks specification satisfiability for
bounded discrete-time STL, while Kress-Gazit et
al.~\cite{kress2018synthesis} survey robot controller synthesis and
diagnostic feedback. Our focus is continuous-time control feasibility
from a given initial state. Our fast feasibility checking and temporal
repair can support online recovery, while control synthesis remains
delegated to the execution controller. To guarantee physical limits,
Hamilton-Jacobi backward reachability computes exact feasible
sets~\cite{chen2018signal} but suffers the curse of dimensionality,
while other approaches adapt spatiotemporal boundaries
online~\cite{basu2025spatiotemporal}. In contrast, our geometric
compilation reduces feasibility checking to a local $\mathcal{O}(nd)$ polyhedral
evaluation. While recent optimization methods tackle Model Predictive
Control (MPC) feasibility restoration using spatial slack
variables~\cite{wu2026feasibility}, our framework uses maximum compiled
residuals to propose analytical temporal adjustments. 

\textbf{SMT and Formal Verification:} Tools like
dReal~\cite{gao2013dreal} implement $\delta$-complete SMT solving for
nonlinear constraints, while reachability engines such as
SpaceEx~\cite{frehse2011spaceex}, Flow*~\cite{chen2013flowstar}, and
Ariadne~\cite{benvenuti2014ariadne} rigorously compute reachable sets
over entire state spaces. These orthogonal formal methods prioritize
exhaustive reachability, whereas our framework provides rapid,
point-wise feasibility checks and finite-time temporal repairs tailored
specifically for pre-deployment screening and automated control loops.

\section{Conclusion and Future Work}
\label{sec:conclusion}
We introduced a geometric procedure for continuous STL feasibility and
temporal repair under actuator limits. Time-zero spatial compilation
allows local certificate evaluation without a temporal sampling loop.
The maximum deficit supplies an analytical first repair; iteration
recompiles after further liveness extensions and, when permitted
dwell-preserving repairs fail, safety contractions. Under the joint
controller assumption, an accepted complete certificate guarantees an
admissible control with nonnegative continuous STL robustness. Compiled
checks are horizon-independent. Benchmarks show lower runtimes than
MILP, including for uncollapsed nested formulae. Future work will reduce
conservatism and extend the approach to nonlinear dynamics.

\bibliographystyle{ACM-Reference-Format}
\bibliography{ref}%

\end{document}